\documentclass[a4paper,english,titlepage,12pt]{article}

\mathsurround=2pt \frenchspacing \righthyphenmin=2
\emergencystretch=1em
\usepackage{babel}\usepackage{amsmath,amssymb,amsthm}
\usepackage[numbers,compress]{natbib}
\usepackage{graphicx}
\usepackage{longtable}
\usepackage{float}
\usepackage{titlesec}
\usepackage{caption}


\newtheorem{lemma}{Lemma}

\newtheorem{theorem}{Theorem}

\newcommand{\iclos}{\mathrm{IClos}}
\newcommand{\nloops}{\mathrm{Loop}}
\newcommand{\lepton}{\mathrm{Lept}}
\newcommand{\photon}{\mathrm{Ph}}
\newcommand{\prj}{\mathcal{A}}
\newcommand{\lpath}{\mathrm{LPath}}

\newcounter{appendcnt}

\makeatletter
\@addtoreset{equation}{section}
\@addtoreset{theorem}{section}
\@addtoreset{lemma}{section}
\makeatother

\captionsetup[figure]{name=Fig,labelsep=period}

\sloppy

\begin{document}
\noindent\LARGE Infrared and Ultraviolet Power Counting on the Mass Shell in Quantum Electrodynamics
\\ \ \\ \ 
\noindent\large{\textbf{S. A. Volkov}}
\\ \ \\ \ 
\noindent\small{SINP MSU, Moscow, Russia; DLNP JINR, Dubna, Russia. E-mail: volkoff\underline{ }sergey@mail.ru, sergey.volkov.1811@gmail.com}
\\ \ \\ \ 
\normalsize
\noindent\textbf{Abstract.} A power counting rule is provided that allows us to obtain upper bounds for the absolute values of Feynman parametric integrands. The rule reflects both the ultraviolet and infrared behavior taking into account that the external momenta are on the mass shell. It gives us the ability to rigorously prove the absolute convergence of the corresponding integrals. The consideration is limited to the case of the quantum electrodynamics Feynman graphs contributing to the lepton magnetic moments and not containing either lepton loops or ultraviolet divergent subgraphs. However, a rigorous mathematical proof is given for all Feynman graphs satisfying these restrictions. The power counting rule is formulated in terms of Hepp's sectors, ultraviolet degrees of divergence and so-called I-closures. The obtained upper bound can not be substantially improved: the illustrative example is provided. The paper provides the first mathematically rigorous treatment of the ultraviolet behavior together with the on-shell infrared behavior with some kind of generality. Practical applications of this rule are explained.

\section{Introduction}\label{sec_intro}

A well known fact is that Feynman integrals written directly can contain ultraviolet (UV) and infrared (IR) divergences. In the case of the quantum electrodynamics (QED) the UV divergences can be recognized by  Dyson's power counting rule ~\cite{dyson}: a Feynman graph is free from UV divergences if this graph and all its subgraphs have the ultraviolet degree of divergence\footnote{see the definition in Section \ref{subsec_def}} less than zero. If a graph has UV divergences, counterterms are required to make the integral finite. However, Dyson's treatment has not been brought to the level of rigorous mathematical proofs. For the case of Euclidean propagators that are separated from zero denominators, rigorous proof of the power counting rule was obtained by S. Weinberg in 1960 ~\cite{weinberg_counting}. The UV divergences of Feynman integrals with genuine (Minkowski) propagators can be treated and subtracted with Bogoliubov's $\mathcal{R}$-operation that is, in a way, a development of A. Salam's ideas ~\cite{salam}. The corresponding theorem was proved by N. Bogoliubov and O. Parasiuk in 1956 ~\cite{bogolubovparasuk}. However, that proof contains some inaccuracies even for the case of Feynman graphs without UV-divergent subgraphs\footnote{For example, the inequality (4.20) in ~\cite{bogolubovparasuk} (or (20.1) in the Russian version) is not correct under the presupposed assumptions even for the scalar case.}. Despite these drawbacks the Bogoliubov-Parasiuk theorem established a quality standard and gave hope that quantum field theory has sense and can be rigorously examined. The flaws in the proof were corrected by K. Hepp in 1964 ~\cite{hepp}. In the QED case we have Feynman graphs with the propagators 
\begin{equation}\label{eq_propagators}
\frac{i(\hat{q}+m)}{q^2-m^2+i\varepsilon},\quad \frac{-g_{\mu\nu}}{q^2+i\varepsilon}
\end{equation}
for the lepton and photon lines correspondingly, where $\hat{q}=q^{\nu}\gamma_{\nu}$ for the Dirac gamma matrices $\gamma_{\nu}$. If we transfer to the Schwinger parameters by using the formula
$$
\frac{i}{x}=\int_0^{+\infty} e^{ixz} dz,
$$
we express the Feynman amplitude as an integral
\begin{equation}\label{eq_schwinger_parametric}
\int_{z_1,\ldots,z_L>0} I(z_1,\ldots,z_L,p_1,\ldots,p_n,\varepsilon) dz_1\ldots dz_L,
\end{equation}
where
\begin{equation}\label{eq_schwinger_product}
I(z,p,\varepsilon)=R(z,p)\cdot e^{iW(z,p,\varepsilon)},
\end{equation}
\begin{equation}\label{eq_znam}
W(z,p,\varepsilon)=\sum_{j,l} A_{jl}(z)p_jp_l - \sum_j z_j((m_j)^2-i\varepsilon),
\end{equation}
$p=(p_1,\ldots,p_n)$ is the vector of external momenta, $z=(z_1,\ldots,z_L)$ is the vector of Schwinger parameters, $R(z,p)$ is a polynomial in $p$ and is rational in $z$, $A_{jl}(z)$ is rational, homogeneous of degree $1$ and real, $m_j$ are the masses of the particles. The Bogoliubov-Parasiuk-Hepp theorem states that the $\mathcal{R}$-operation leads to an absolute convergent integral (\ref{eq_schwinger_parametric}), if $\varepsilon>0$ is fixed. Also, it was proved that for non-zero masses of the particles the Feynman amplitude tends to a distribution as $\varepsilon\rightarrow 0$. This result, however, is not concerned with real life despite the fact that there are some generalizations of this result in situations when zero-mass particles exist; the reason is that this does not take into account that the external momenta are on the mass shell. If a graph does not have UV-divergent subgraphs, then divergence subtractions are not needed; a direct proof in momentum space for the power counting rule was obtained by W. Zimmermann ~\cite{zimmermann_counting} with the help of a non-covariant regularization. Also, Zimmermann proved that the integrals tend to a relativistically covariant limit as the regularization parameter tends to zero\footnote{However, Feynman parameters were used for proving this.}. The asymptotic behavior of $I(z,p,\varepsilon)$ in (\ref{eq_schwinger_parametric}) may be estimated using a modification of E. Speer's lemma ~\cite{speer} for a fixed $\varepsilon>0$:
\begin{equation}\label{eq_speer}
|R(z,p)|\leq C(p)\cdot \frac{\left(z_{j_1}\right)^{d_1} \left(z_{j_2}/z_{j_1}\right)^{d_2} \left(z_{j_3}/z_{j_2}\right)^{d_3} \cdot \ldots\cdot \left(z_{j_L}/z_{j_{L-1}}\right)^{d_L}}{z_1\cdot z_2\cdot\ldots \cdot z_L},
\end{equation}
where $j_1,\ldots,j_L$ is the permutation of $1,\ldots,L$ that orders\footnote{These orders are called the Hepp sectors ~\cite{hepp}.} $z_j$:
\begin{equation}\label{eq_order}
z_{j_1}\geq z_{j_2}\geq\ldots\geq z_{j_L},
\end{equation}
$$
d_l=\lceil-\omega(\{j_l,j_{l+1},\ldots,j_L\})\rceil,
$$
$\omega(s)$ is the UV degree of divergence of the set of internal lines with the numbers in $s$ (see the definition in Section \ref{subsec_def}), $C$ is some coefficient, by $\lceil x\rceil$ we denote the minimum integer $y\geq x$. This estimation immediately leads to absolute convergence of (\ref{eq_schwinger_parametric}) for the given case\footnote{This can be proved by changing variables $t_1=z_{j_1}$, $t_2=z_{j_2}/z_{j_1}$, ..., taking into account that $d_j>0$ and $|e^{iW(z,p)}|\leq e^{-\varepsilon t_1}$.}. 

When the limit $\varepsilon\rightarrow 0$ is considered, we should transfer from the Schwinger parameters to the Feynman parameters by integrating analytically $I(z,p,\varepsilon)$ from (\ref{eq_schwinger_parametric}) with respect to $\lambda=z_1+\ldots+z_L$. If the sign of $W(z,p,0)$ in (\ref{eq_schwinger_product}) is constant, we obtain the Feynman amplitude as an integral
\begin{equation}\label{eq_feynman_parametric}
\int_{z_1,\ldots,z_L>0} I'(z_1,\ldots,z_L,p) \delta(z_1+\ldots+z_L-1) dz_1\ldots dz_L,
\end{equation}
where $I'(z,p)$ is the sum of terms of the form
\begin{equation}\label{eq_feynman_param_term}
\frac{R'(z,p)}{W(z,p)^M},
\end{equation}
where $R'$ is a polynomial in $p$ and is rational in $z$, (\ref{eq_znam}) is satisfied with $W(z,p)=W(z,p,0)$, $M>0$ is some integer, if the considered graph has a negative UV degree of divergence or the corresponding ``overall-divergent'' terms are cancelled in some way. The numerator $R'(z,p)$ can be estimated analogously to (\ref{eq_speer}). However, the existence of the denominator $W(z,p)^M$ significantly complicates the problem. In the general case, this leads to an emergence of IR divergences. These divergences cancel in scattering cross-sections if we take into account the finite sensitivity of photon detectors and if we treat correctly the IR divergences connected with the lepton wave function renormalization. These cancellations were considered by different authors; see, for example, ~\cite{bloch_nordsieck,jauch_rohrlich,yennie} and ~\cite{weinberg_textbook}\footnote{The IR divergences cancellation is explained  in Part 13 ``Infrared Effects'' of ~\cite{weinberg_textbook}.}. However, the level of rigour in these examinations is too far from the quality standard established by the Bogoliubov-Parasiuk-Hepp theorem: all of these treatments use unjustified approximations\footnote{For example, the IR divergences of power type that can emerge when we have a lepton self-energy subgraph require a more meticulous treatment; see Discussion in ~\cite{volkov_2015} and the references in it. However, these divergences are not the only thing that require an accurate consideration: all estimations, in principle, require the justification.}; the matter of IR divergences connected with the on-shell renormalization constants is ignored or is mistakenly taken under examination (despite the fact that these IR divergences are the only reason why the lepton magnetic moments are IR-finite in QED\footnote{see Part C of ~\cite{grozin}}). The cancellation of IR divergences is often connected with the Kinoshita-Lee-Nauenberg (KLN) theorem. However, it is only a tradition to use this denomination. The letter ``K'' from KLN corresponds to ~\cite{kinoshita_singularities}. That paper contains a detailed description of the nature of IR singularities and their cancellation in terms of Feynman parameters. However, that consideration does not try to meet the standards of mathematical rigour and does not contain any theorems. Moreover, the 3-loop calculation of the electron anomalous magnetic moment performed by T. Kinoshita and P. Cvitanovi\`{c} in 1974 required a treatment that is beyond the scope of ~\cite{kinoshita_singularities}; see ~\cite{kinoshita_infrared}. The letters ``LN'' from KLN correspond to ~\cite{lee_nauenberg} with a quantum-mechanical treatment of the problem. However, that treatment requires a correct formulation of quantum field theory in terms of the evolution of the state vector over time; this is a much stronger presupposition than any finiteness theorem about scattering cross-sections in the perturbation theory. Thus, that result can be considered only as a substantive treatment at a physical level of rigour. Note that there are some extensions of the $\mathcal{R}$-operation and Bogoliubov-Parasiuk-Hepp theorem to the case of IR divergences; see, for example, ~\cite{karchev,lowenstein,rstar,bergerelam}. However, these extensions are not related to the on-shell IR divergences that occur in real life.

The case of the lepton magnetic moment is a relatively simple one: all IR divergences are removed by the on-shell renormalization with the UV ones; consideration of real soft photon emission is not required\footnote{However, for obtaining an absolutely convergent integral for each Feynman graph it is required to subtract IR divergences together with the UV ones. Different authors developed different methods for performing this subtraction that work in some cases. See ~\cite{kinoshita_infrared,carrollyao,levinewright,kinoshita_atoms,volkov_2015}. Let us emphasize that all those methods eliminate divergences directly in parametric spaces without any use of dimensional regularization and so on (in the spirit of ~\cite{bogolubovparasuk,hepp}).}. Moreover, the sign of $W(z,p,0)$ in (\ref{eq_schwinger_product}) is constant inside the integration area. However, $W(z,p)$ in (\ref{eq_feynman_param_term}) vanishes on some points of the boundary, and a rigorous finiteness proof still has not been obtained even for this case. Let us draw attention to some difficulties that appear when we investigate IR divergences. All UV divergences in the Schwinger parametric integral (\ref{eq_schwinger_parametric}) can be recognized by taking $z_{j_1}=z_{j_2}=\ldots=z_{j_l}=\delta$ and calculating the leading power of $\delta$ in the limit $\delta\rightarrow 0$. However, this statement is wrong for IR divergences in Feynman parametric integrals even for constant-sign $W(z,p)$. For example, sometimes a substitution of the form $z_{j_1}=\ldots=z_{j_l}=\delta,z_{j_{l+1}}=\ldots=z_{j_q}=\delta^2$ is required; see ~\cite{kinoshita_infrared}. Also, IR and UV divergences can occur in a mixed form. 

This can be illustrated by the simple integral
$$
f(\lambda,\Lambda)=\int_{\lambda}^{\Lambda} \frac{dxdy}{x^4y^2+1},
$$
$0<\lambda<\Lambda$. It is easy to see that $\lim_{\lambda\rightarrow 0} f(\lambda,\Lambda)$ is finite for any fixed $\Lambda$ as well as $\lim_{\Lambda\rightarrow +\infty} f(\lambda,\Lambda)$ is finite for any fixed $\lambda>0$. However, the simultaneous limit $\lim_{\lambda\rightarrow 0,\Lambda\rightarrow +\infty} f(\lambda,\Lambda)$ is infinite\footnote{The integral over the domain $\{x\geq\lambda;y>0\}$ equals $C\int_{\lambda}^{+\infty} dx/x^2 = C/\lambda$, where $C=\int_0^{+\infty} dt/(t^2+1)$, but over $\{x>0;y\leq\Lambda\}$ it equals $C\int_{0}^{\Lambda}dx/\sqrt{x}=2C\sqrt{\Lambda}$, where $C=\int_0^{+\infty} dt/(t^4+1)$.}. For a physical case this mixing can be demonstrated by explicit formulas in ~\cite{adkins}. In addition, it is not obvious from the beginning that the denominator in (\ref{eq_feynman_param_term}) will not spoil the convergence corresponding to the numerator, if the numerator is ``near to the UV-divergent'' (for example, if $d_j>0$ in (\ref{eq_speer}) are near to zero). Therefore, we must consider the UV and IR behavior together. The author believes that these difficulties are one of the main reasons why the problem of rigorous treatment of IR divergences lost the attention of scientists starting in the 1980's and why most computations in quantum field theory use methods that allow us to calculate without understanding the nature of IR divergences\footnote{based on analytical continuations on the complex plane and, in particular, dimensional regularization}.

In this paper we obtain an upper bound for $|I'(z,p)|$ in (\ref{eq_feynman_parametric}) in the form (\ref{eq_speer}) with another degrees. This estimation provides a rigorous proof of the absolute convergence of the Feynman parametric integral. The consideration is restricted to the case of QED Feynman graphs contributing to the lepton magnetic moment and not containing lepton loops and UV-divergent subgraphs. However, even for this case a rigorous convergence proof had not been obtained before. The obtained estimation is based on the notion of \emph{I-closures} that was introduced in ~\cite{volkov_prd} (see the definition in Section \ref{subsec_def}). More precisely, we prove that the estimation of the form (\ref{eq_speer}) is satisfied with
$$
d_l=\max\left( \lceil -\omega(\iclos(\{j_l,\ldots,j_L\}))\rceil - \frac{1}{2}, \frac{1}{2}\right).
$$
The I-closure $\iclos$ gives a set with the greater ultraviolet degree of divergence. This allows us to take into account the presence of a denominator in (\ref{eq_feynman_param_term}). As a consequence, it is easy to see for the Feynman parametric integrand $I'(z)$ that
\begin{equation}\label{eq_general_est}
|I'(z)|\leq C\cdot \frac{(\min(z_1,\ldots,z_L))^{1/2}}{z_1\cdot\ldots \cdot z_L}.
\end{equation}
This leads to the absolute convergence of the Feynman parametric integral. The power $1/2$ in (\ref{eq_general_est}) is unimprovable (in contrast to the known case with fixed $\varepsilon>0$ where all degrees are integer, and $1/2$ can be replaced with $1$); see an example in Appendix. The considered set of Feynman graphs, on the one hand, possesses some kind of generality and allows us to demonstrate ideas and a technique of how the on-shell IR behavior can be treated rigorously together with the UV behavior; on the other hand, the proofs are relatively simple for this case\footnote{for example, in comparison with known proofs of the Bogoliubov-Parasiuk-Hepp theorem}. The paper provides the first mathematically rigorous proof that all divergences in the graphs without lepton loops contributing to the lepton magnetic moment are connected with the UV divergent subgraphs.

In addition to rigorous proofs, the obtained estimations for the integrands have a practical application: it can be used for constructing the probability density functions (PDF) for Monte Carlo integration\footnote{See a review about Monte Carlo integration in ~\cite{mc_james}.}. For integration of a function $f(z)$ using Monte Carlo the PDF $g(z)$ should be as near as possible to $C|f(z)|$ (for reaching a fast convergence and a high precision of integration) and suitable for random sampling. When $f(z)$ is not square-integrable and the number of variables is large, the PDF should be chosen very accurately: on the one hand, an underestimation of the asymptotic growth rate near the boundary even for one variable and for one sector leads to an infinite value $\int \frac{f(z)^2}{g(z)}dz$ and to poor and unstable convergence (as a consequence); on the other hand, a total overestimation of this growth rate leads to poor convergence too. Estimations like (\ref{eq_speer}) with different asymptotic growth rates for different Hepp sectors are useful in this case. PDFs based on similar expressions with the use of I-closures exploited for calculating the total contribution of the 5-loop QED Feynman graphs without lepton loops to the electron anomalous magnetic moment\footnote{This contribution is sensible in experiments and still does not have a reliable computed value ~\cite{volkov_5loops_prd}. For the 5-loop case, known computation methods that allow us to ``calculate without thinking'' require an enormous amount of computer time.} ~\cite{volkov_5loops_prd}. Those PDFs allowed us to reach the needed precision of the Monte Carlo integration on a relatively small supercomputer. However, more complicated expressions were used for that development: divergence subtraction was required, and some adjustment was needed for reaching the maximum Monte Carlo convergence speed.

The paper is organized as follows. Known rules for constructing Schwinger parametric and Feynman parametric integrands are given in Section \ref{subsec_constr} after definitions in Section \ref{subsec_def}. A modification of E. Speer's lemma ~\cite{speer} that is convenient for our examinations is proved in Section \ref{subsec_numerator}. It was noted that the numerator in (\ref{eq_feynman_param_term}) is easily estimated using the known combinatorial formulas with sums over graphs. However, the influence of the denominator is processed better with the help of the electric circuit analogy; Section \ref{subsec_denominator} explains this influence. The impact of the magnetic moment projector\footnote{It is known from many quantum field theory textbooks that the magnetic moment projector eliminates all IR and UV divergences in the 1-loop case. However, it plays an important role in removing divergences for the higher orders too.} can be taken into account in both ways; we use the combinatorial one. This impact is described in Section \ref{subsec_projector} with the completion of the main estimation proof. The appendix contains an example demonstrating that the obtained estimation is exact.

\section{Preliminaries}

\subsection{Definitions}\label{subsec_def}

We use the metric $g_{\mu\nu}$, where $g_{00}=1$, $g_{11}=g_{22}=g_{33}=-1$, the Dirac gamma matrices $\gamma_{\mu}$ satisfy $\gamma^{\mu}\gamma^{\nu}+\gamma^{\nu}\gamma^{\mu}=2g^{\mu\nu}$.

Let $G$ be a QED Feynman graph with the propagators (\ref{eq_propagators}) for lepton and photon lines. We suppose that $G$ is strongly connected, has two external lepton lines and one external photon line, $L$ internal lines that are enumerated $1,2,\ldots,L$. By $V(G)$ and $E(G)$ we denote the sets of all vertexes and internal lines of $G$ correspondingly. The Feynman amplitude $\Gamma_{\mu}(p,q)$ corresponds to $G$; here $p-q/2$ is the incoming external lepton momentum, $p+q/2$ is the outgoing one, $q$ is the external photon momentum.

Let us define the \emph{anomalous magnetic moment projector} $\prj$. The Feynman amplitude $\Gamma(p,q)$ can be expressed in terms of three form-factors:
$$
\overline{\psi}_2 \Gamma_{\mu}(p,q) \psi_1 = \overline{\psi}_2 \left( f(q^2)\gamma_{\mu} - \frac{1}{2m}g(q^2) \sigma_{\mu\nu} q^{\nu} + h(q^2) q_{\mu} \right) \psi_1,
$$
where
$$
\left( \hat{p}-\frac{\hat{q}}{2} - m\right) \psi_1 = \overline{\psi}_2 \left( \hat{p}+\frac{\hat{q}}{2} - m\right)=0,
$$
$$
\sigma_{\mu\nu}=\frac{1}{2} (\gamma_{\mu}\gamma_{\nu}-\gamma_{\nu}\gamma_{\mu}).
$$
Put 
$$
\prj\Gamma=\lim_{q^2\rightarrow 0} g(q^2)
$$
(see ~\cite{ll4}\footnote{part XII ``Radiative corrections'', \S 116 ``Electromagnetic form factors of the electron''; note that the term $h(q^2) q_{\mu}$ vanishes in the final result, but it remains non-zero in the contributions of separate graphs and $z$.}).

A set $s\subseteq E(G)$:
\begin{itemize}
\item
 is called an \emph{1-tree}, if there is a path in $s$ between any $v_1,v_2\in V(G)$ and $s$ does not have cycles;
\item is called a \emph{tree with cycle} if it has a cycle and it becomes a 1-tree after deleting any line of this cycle.
\end{itemize}

By $\lepton(s)$ and $\photon(s)$ we denote the sets of all lepton and photon lines of $s$ correspondingly; by $\nloops(s)$ we denote the minimum of $|s\backslash T|$, where $T$ runs over all 1-trees of $G$.

The \emph{ultraviolet degree of divergence} is defined as
$$
\omega(s)=2\cdot\nloops(s)-|s|+\frac{1}{2}|\lepton(s)|.
$$
For connected sets $s$ it equals $2-\frac{3}{4}N_{\lepton}-\frac{1}{2}N_{\photon}$, where $N_{\lepton}$, $N_{\photon}$ are the numbers of external lepton and photon lines in regard to $s$.

We suppose that $G$ does not have lepton cycles, and $\omega(s)<0$ for all $s\subseteq E(G)$ except the empty set and $E(G)$.

We say that a vertex $v$ is \emph{incident} to a line $l$, if $v$ is one of the ends of $l$.

If $i\in\photon(E(G))$, then by $\lpath(i)$ we denote the set of all lines that are on the lepton path connecting the vertexes incident to $i$.

By definition, put
$$
\iclos(s)=s\cup\{i\in\photon(E(G)):\ \lpath(i)\subseteq s\}.
$$
For example, for the graph from Fig. \ref{fig_example_iclos},
$$
\iclos(\{3,5,6\})=\{3,5,6\},
$$
$$
\iclos(\{3,4,5,6\})=\{3,4,5,6,9\},
$$
$$
\iclos(\{2,3,4,5,6,7\})=\{2,3,4,5,6,7,8,9\},
$$
$$
\iclos(\{1,2,3,4,5,6\})=\{1,2,3,4,5,6,7,8,9\}.
$$

\begin{figure}[h]
\begin{center}
\includegraphics{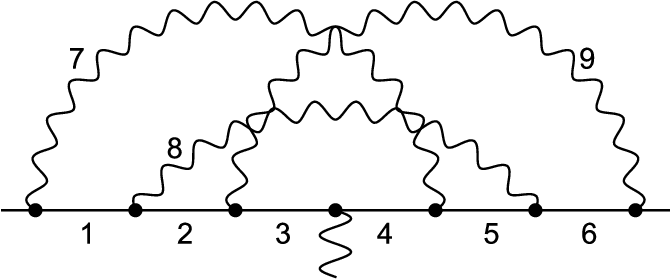}
\end{center}
\caption{Example of a Feynman graph contributing to the lepton anomalous magnetic moment.}
\label{fig_example_iclos}
\end{figure}

\subsection{Construction of the integrands}\label{subsec_constr}

\subsubsection{General rules in Schwinger parametric space}\label{subsec_constr_schwinger}

Let us first describe the known rules of constructing (\ref{eq_schwinger_parametric}) corresponding to a general QED Feynman graph $G$ with the external momenta $p_1,\ldots,p_n$ without lepton loops. The rules are based on the known formulas; see ~\cite{zavialov,smirnov}\footnote{More precisely, the formulas are described in Chapter II ``Parametric representations for Feynman diagrams. R-operation'', Section 1 ``Regularized Feynman diagrams'' of ~\cite{zavialov} and in Part I "Regularized Feynman amplitudes", Section 2 ``Parametric representations of Feynman amplitudes'' of ~\cite{smirnov}; we use those formulas combined with the Feynman rules of QED. See also the formulas in ~\cite{kinoshita_rules,kinoshita_automated} formulated in a slightly different way.}. The coefficient before the expression is omitted.
\begin{itemize}
\item $R(z,p)$ from (\ref{eq_schwinger_product}) is represented as
\begin{equation}\label{eq_r_details}
R(z,p)=\Pi_1(p)R_1(z)+\ldots+\Pi_N(p)R_N(z),
\end{equation}
where $R_j(z)$ is rational and homogeneous of degree $u_j$, each $\Pi_j(p)$ is the product of multipliers of the form $\hat{p}_l$, $p_{l\mu}$, $\gamma_{\mu}$, $g_{\mu\nu}$, $(p_{l'}p_{l''})$, where $l,l',l''$ are indexes of the external lines, $\mu,\nu$ are tensor indexes of some external photon lines. Tensor indexes are omitted for simplicity.
\item Each term $\Pi_jR_j$ corresponds to a set $P_j$ of disjoint unordered pairs of internal lepton lines\footnote{The summation over sets of pairs comes from the formula of multiple differentiation of $\exp(f(\xi))$ with respect to $\xi$, where $f(\xi)$ is a polynomial of degree $2$. The variables $\xi$ come from the additional multipliers $\exp(q_l\xi_l)$ to the Schwinger parametric propagators $\exp(iz_l(q_l^2-m^2+i\varepsilon))$, where $q_l$ is the momentum of the internal line $l$. These multipliers are used for obtaining $\hat{q}$ in the numerator of the lepton propagator (\ref{eq_propagators}) by differentiating with respect to $\xi$ at $\xi=0$ and lead to a modification of $W(z,p,\varepsilon)$ in (\ref{eq_schwinger_product}). See ~\cite{zavialov,smirnov} and also ~\cite{kinoshita_rules}.}. One set of pairs can generate several terms $\Pi_jR_j$; however, all $R_j(z)$ corresponding to one set have the same degree.
\item The part of (\ref{eq_r_details}) that corresponds to a set $P$ of pairs of lepton lines is obtained by multiplying (and performing the needed tensor convolutions using $g^{\mu\nu}$) of the multipliers that are described below. The order of the multiplication corresponds to Feynman rules: if $a_1,\ldots,a_n$ are the multipliers are ordered along the path ($a_1$ and $a_n$ correspond to the beginning and end accordingly), then the corresponding product is $a_n a_{n-1}\ldots a_1$. The coefficient is omitted. An internal lepton line is called \emph{paired} if it belongs to some pair in $P$. The other internal lepton lines are called \emph{unpaired}.
\item The global multiplier 
$$
\frac{1}{D(z)^2}
$$
starts the product, where
$$
D(z)=\sum_T \prod_{l\in E(G)\backslash T} z_l,
$$
the summation goes over all 1-trees in $G$.
\item The multiplier $\gamma_{\nu}$ corresponds to each of the graph vertex, where $\nu$ is the tensor index that corresponds to the photon line that is incident to this vertex.
\item The multiplier $\gamma_{\nu}$ also corresponds to each paired internal lepton line, where $\nu$ is the tensor index corresponding to the pair in $P$ that contains this line.
\item Also, each pair $\{l_1,l_2\}\in P$ gives the global multiplier 
$$
\frac{B_{l_1l_2}(z)}{D(z)},
$$
where
\begin{equation}\label{eq_b}
B_{l_1l_2}(z)=\sum_{T'} c(T')\prod_{l\in E(G)\backslash T'} z_l,
\end{equation}
the summation goes over all trees with cycle containing $l_1$ and $l_2$ on the cycle, $c(T')=1$ if the direction of $l_1$ and $l_2$ on the cycle is the same, $c(T')=-1$ if the direction is opposite.
\item The multiplier 
$$
m+\frac{\hat{Q}_l(z)}{D(z)}
$$
corresponds to each unpaired internal lepton line $l$. Here
$$
Q_l(z)=\sum_T p[T] \prod_{l'\in E(G)\backslash T} z_{l'},
$$
where the summation goes over all 1-trees $T$ containing $l$, by $p[T]$ we denote the momentum that passes through $l$ in $T$; it is defined as the sum of the external momenta of $G$ outgoing from the connectivity component of $T\backslash l$ to which the line $l$ is directed.
\item $W(z,p,\varepsilon)$ from (\ref{eq_schwinger_product}) is obtained using the electric circuit analogy ~\cite{bjorkendrell2}\footnote{Chapter 18 ``Dispersion Relations'', Section 18.4 ``Generalization to Arbitrary Graphs and the Electrical Circuit Analogy'' of ~\cite{bjorkendrell2}}. Namely, $\sum_{j,l}A_{jl}p_jp_l$ from (\ref{eq_znam}) equals the electrical power that is absorbed within the circuit with topology $G$, line resistances $z_l$, external currents\footnote{The coefficients $A_{jl}(z)$ should be obtained supposing that the external currents are real numbers. After this, the 4-vectors $p_1,\ldots,p_n$ are substituted to the formula.} $p_1,\ldots,p_n$.
\end{itemize}

\subsubsection{Feynman parameters and the magnetic moment projector}\label{subsec_constr_feynman_amm}

Let $G$ be a Feynman graph satisfying the restrictions from Section \ref{subsec_def}; its external momenta are determined by two parameters $p$ and $q$.

Let us consider the Schwinger parametric term
$$
\Pi_j(p,q)R_j(z)e^{iW(z,p,q,\varepsilon)}.
$$
After applying the projector $\prj$ for $\varepsilon=0$ we obtain
$$
\prj\left[\Pi_j(p,q)R_j(z)e^{iW(z,p,q,0)}\right]=[\prj\Pi_j]R_j(z)e^{iW(z)},
$$
where 
\begin{equation}\label{eq_w_short_def}
W(z)=\left.W(z,p,q,0)\right|_{p^2=m^2,q^2=0}.
\end{equation}

The equality follows from the fact that $\prj[\Gamma(p,q)]$ is determined by $\Gamma(p,q)$ at $\{p^2=m^2,q=0\}$ and its derivatives at these points along the directions\footnote{see an explicit formula in ~\cite{kinoshita_automated}} $(p',q')$ satisfying the condition $pp'=pq'=0$; these derivatives of $W$ equal $0$, because they equal $0$ for all scalar products of the external momenta. 

Let us remark that
\begin{equation}\label{eq_w_from_z}
W(z)=m^2(Z(z)-Z_0(z)),
\end{equation}
where $Z(z)$ is the resistance between the vertexes that are incident to the external lepton lines of the electric circuit with topology $G$, line resistances $z_l$; $Z_0(z)=\sum_{l\in\lepton(E(G))} z_l$ is the resistance of the corresponding circuit with removed photon lines.

If $R_j(z)$ is homogeneous of degree $u_j$, then after transferring to the Feynman parameters we obtain the corresponding term
$$
\frac{[\prj\Pi_j]R_j(z)}{W(z)^{L+u_j}}
$$
(the coefficient is omitted). Let us note that the analytical integration with respect to $\lambda=z_1+\ldots+z_L$ is not applicable to the ``overall'' UV-divergent terms, i.e. the terms corresponding to the case when all lines from $\lepton(E(G))$ are paired; however, these terms are eliminated by the magnetic moment projector, because
$$
\prj \gamma_{\mu}=0.
$$

\section{Proof of the main inequality}

\subsection{Estimation of the numerator}\label{subsec_numerator}

Let us consider the sum of the Feynman parametric terms corresponding to the set $P$ of pairs of internal lepton lines. This sum
$$
\sum_{j:P_j=P} \frac{[\prj\Pi_j]R_j(z)}{W(z)^{L+u_j}}
$$
equals
\begin{equation}\label{eq_global_multiplier_factor_out}
\frac{K(z)\cdot \sum_{j:P_j=P} [\prj \Pi_j]Y_j(z)}{W(z)^{L+u[P]}},
\end{equation}
where $K(z)$ contains all global multipliers $(1/D(z)^2)$ and $(B_{l_1l_2}(z)/D(z))$ of the construction, $u[P]$ equals $p_j$ for all $j$ such that $P_j=P$ (they are the same), $Y_j(z)$ absorbs the remaining multipliers in $R_j(z)$.

By definition, put
$$
P[s]=\{a\in P:\ a\subseteq s\},
$$
for sets $s\subseteq E(G)$.

Let $j_1,\ldots,j_L$ be the permutation of $1,\ldots,L$ satisfying (\ref{eq_order}). By $s^{[l]}$ we denote the set $\{j_l,j_{l+1},\ldots,j_L\}$. Also, we will use the notation
$$
t_1=z_{j_1},\ t_2=\frac{z_{j_2}}{z_{j_1}},\ 
\ldots,\ t_L=\frac{z_{j_L}}{z_{j_{L-1}}}.
$$

We will also use the fact that $z_1,\ldots,z_L$ are Feynman parameters, i.e., $z_1+\ldots+z_L=1$. Therefore, $t_1,\ldots,t_L\leq 1$.

\begin{lemma}\label{lemma_prod}
For any $T\subseteq E(G)$ we have
$$
\prod_{l\in E(G)\backslash T} z_l = \prod_{l=1}^L t_l^{|s^{[l]}\backslash T|}.
$$
\end{lemma}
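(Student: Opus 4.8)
The plan is to write each $z_{j_l}$ as a product of the $t_k$'s and then interchange the order of multiplication. By the very definition of $t_1,\dots,t_L$ the product $t_1t_2\cdots t_l$ telescopes:
$$
t_1t_2\cdots t_l = z_{j_1}\cdot\frac{z_{j_2}}{z_{j_1}}\cdots\frac{z_{j_l}}{z_{j_{l-1}}} = z_{j_l},
$$
and this holds for every $l=1,\dots,L$. Since $j_1,\dots,j_L$ is a permutation of $1,\dots,L$ (identifying an internal line with its number, so that $E(G)=\{1,\dots,L\}$ as an index set), the left-hand side of the lemma can be rewritten as a product over those $l$ with $j_l\notin T$, and then expanded:
$$
\prod_{l\in E(G)\setminus T} z_l \;=\; \prod_{\substack{1\le l\le L\\ j_l\notin T}} z_{j_l} \;=\; \prod_{\substack{1\le l\le L\\ j_l\notin T}}\ \prod_{k=1}^{l} t_k .
$$

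Next I would interchange the two products and count the multiplicity with which each factor $t_k$ occurs. In the double product above, $t_k$ is contributed exactly once by every index $l$ satisfying simultaneously $k\le l\le L$ and $j_l\notin T$. The map $l\mapsto j_l$ is a bijection from this set of indices onto $\{j_k,j_{k+1},\dots,j_L\}\setminus T = s^{[k]}\setminus T$, so the exponent of $t_k$ equals $|s^{[k]}\setminus T|$. Therefore
$$
\prod_{l\in E(G)\setminus T} z_l \;=\; \prod_{k=1}^{L} t_k^{\,|s^{[k]}\setminus T|},
$$
which is the asserted identity after renaming the index $k$ to $l$.

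This is essentially a routine interchange in the order of multiplication; the only step needing a moment's care is the bookkeeping that turns ``the number of $l\ge k$ with $j_l\notin T$'' into $|s^{[l]}\setminus T|$, and that is immediate from the definition $s^{[l]}=\{j_l,\dots,j_L\}$. Note that the constraint $z_1+\dots+z_L=1$ and all graph-theoretic structure are irrelevant here: the statement is a purely combinatorial identity about the monomials $z_{j_l}$, valid for any positive reals ordered as in~(\ref{eq_order}).
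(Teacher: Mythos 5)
Your proof is correct and follows essentially the same route as the paper's one-line argument: telescope $z_{j_l}=t_1t_2\cdots t_l$ and then count how many times each $t_k$ appears in the resulting double product. You merely spell out the multiplicity bookkeeping that the paper leaves implicit.
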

\begin{proof}
$$
\prod_{l\in E(G)\backslash T} z_l = \prod_{j_l\in E(G)\backslash T} z_{j_l} = \prod_{j_l\in E(G)\backslash T} (t_1t_2\ldots t_l) = \prod_{l=1}^L t_l^{|s^{[l]}\backslash T|}.
$$
\end{proof}

\begin{lemma}\label{lemma_d}
The following inequality is satisfied\footnote{Also, it can be proved that the inequality is satisfied in both directions: the sign $\geq$ can be changed to $\leq$ (however, this fact is not needed for our examination).}:
$$
D(z)\geq C \prod_{l=1}^L t_l^{\nloops(s^{[l]})},
$$
where $C>0$ is some constant (depending only on the structure of the graph).
\end{lemma}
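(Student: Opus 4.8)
The plan is to reduce the bound to the existence of a single $1$-tree that is simultaneously ``optimal'' for the whole nested chain $s^{[1]}\supseteq s^{[2]}\supseteq\cdots\supseteq s^{[L]}$. First I would apply Lemma \ref{lemma_prod} to every summand of $D(z)=\sum_T\prod_{l\in E(G)\backslash T}z_l$, rewriting
$$
D(z)=\sum_T\prod_{l=1}^L t_l^{\,|s^{[l]}\backslash T|},
$$
the sum running over all $1$-trees $T$ of $G$. Since $t_l>0$ every summand is positive, so it suffices to bound $D(z)$ below by one well-chosen summand; and since $0<t_l\le 1$, the summand attached to $T$ is at least $\prod_{l=1}^L t_l^{\nloops(s^{[l]})}$ as soon as $|s^{[l]}\backslash T|\le\nloops(s^{[l]})$ for every $l$. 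By the definition $\nloops(s)=\min_T|s\backslash T|$ the reverse inequality $|s^{[l]}\backslash T|\ge\nloops(s^{[l]})$ always holds, so the whole lemma (even with $C=1$) follows once we exhibit a $1$-tree $T$ with $|s^{[l]}\backslash T|=\nloops(s^{[l]})$ for all $l=1,\dots,L$ simultaneously.

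To build such a $T$ I would run the greedy procedure: go through the internal lines in the order $j_L,j_{L-1},\dots,j_1$ (from the smallest $z$ to the largest) and put a line into $T$ whenever doing so does not create a cycle with the lines chosen so far. Since $G$ is connected this produces a $1$-tree. Write $r(s)$ for the common size of all maximal acyclic subsets of an edge set $s$; then $\nloops(s)=|s|-r(s)$, because $s\cap T$ is acyclic (so $|s\cap T|\le r(s)$) and a maximal acyclic subset of $s$ can be extended to a $1$-tree of $G$ (so equality is attained by some $T$). The crucial point is that the greedy tree $T$ realises the rank of every prefix of the chosen order, i.e.\ $|T\cap\{j_l,\dots,j_L\}|=r(\{j_l,\dots,j_L\})$ for all $l$; as $s^{[l]}=\{j_l,\dots,j_L\}$ is precisely the set of the first $L-l+1$ lines in that order, this is the familiar matroid property of the greedy algorithm. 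Its one-line proof: if for some prefix $S$ the set $T\cap S$ were not of maximal rank $r(S)$, then $T\cap S$ could be enlarged inside $S$ by some line $e$; choosing the first such $e$ in the order, the already-chosen lines preceding $e$ together with $e$ form an acyclic set, so the greedy step would have selected $e$ --- a contradiction. Hence $|s^{[l]}\backslash T|=|s^{[l]}|-r(s^{[l]})=\nloops(s^{[l]})$, the corresponding summand of $D(z)$ equals $\prod_{l=1}^L t_l^{\nloops(s^{[l]})}$, and the claim holds with $C=1$.

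The only non-routine step is the simultaneous optimality of the greedy $1$-tree along the nested chain $s^{[1]}\supseteq\cdots\supseteq s^{[L]}$; the rest is the substitution supplied by Lemma \ref{lemma_prod} together with the elementary identity $\nloops=|\,\cdot\,|-\mathrm{rank}$. I do not expect trouble with the constant: the greedy tree depends on the Hepp sector, but $C=1$ works in every sector, which in particular yields a single constant depending only on $G$ as stated. (If one prefers to avoid matroid terminology, the same $1$-tree can be obtained by repeatedly contracting the smallest line that has not yet become a self-loop and discarding the smallest line that has become one, and the prefix-rank property then follows by induction on $L$.)
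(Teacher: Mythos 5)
Your proof is correct and follows essentially the same route as the paper: both reduce the bound (via Lemma \ref{lemma_prod} and positivity of the summands of $D$) to exhibiting a single $1$-tree $T$ with $|s^{[l]}\backslash T|=\nloops(s^{[l]})$ for every $l$, and both obtain $T$ by the same greedy construction along $s^{[L]}\subseteq\cdots\subseteq s^{[1]}$. You additionally spell out the matroid prefix-rank argument that the paper leaves implicit, and correctly observe that $C=1$ suffices.
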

\begin{proof}
Taking into account Lemma \ref{lemma_prod}, we should only prove that there exists a 1-tree $T$ such that $|s^{[l]}\backslash T|=\nloops(s^{[l]})$ for all $l$. This 1-tree can be obtained by adding lines (starting from the empty set) to complement the set to a maximal acyclic set in $s^{[l]}$ sequentially for $l=L,L-1,\ldots,1$.
\end{proof}

\begin{lemma}\label{lemma_speer}
$K(z)$ from (\ref{eq_global_multiplier_factor_out}) satisfies
$$
|K(z)|\leq C \cdot\frac{\prod_{l=1}^L t_l^{|s^{[l]}|-2\cdot \nloops(s^{[l]})-|P[s^{[l]}]|}}{z_1\ldots z_L}.
$$
for some constant $C$ (depending only on the structure of the graph $G$).
\end{lemma}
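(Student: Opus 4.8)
The plan is to make $K(z)$ explicit and then split the estimate: the $D$-factors are controlled from below by Lemma \ref{lemma_d}, and the $B$-factors must be controlled from above. Collecting from the construction in Section \ref{subsec_constr} the global multiplier $1/D(z)^2$ together with the factor $B_{l_1l_2}(z)/D(z)$ contributed by each pair of $P$, one has
$$
K(z)=\frac{\prod_{\{l_1,l_2\}\in P}B_{l_1l_2}(z)}{D(z)^{\,2+|P|}}.
$$
By Lemma \ref{lemma_d} the denominator satisfies $D(z)^{2+|P|}\ge C\prod_{l=1}^{L} t_l^{(2+|P|)\nloops(s^{[l]})}$, so it is enough to establish, for every pair $\{l_1,l_2\}\in P$, a bound $|B_{l_1l_2}(z)|\le C\prod_{l=1}^{L} t_l^{\,a_l}$, where $a_l=\nloops(s^{[l]})$ if $\{l_1,l_2\}\not\subseteq s^{[l]}$ and $a_l=\nloops(s^{[l]})-1$ if $\{l_1,l_2\}\subseteq s^{[l]}$. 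Granting this, multiplying over the $|P|$ pairs produces the exponent $\sum_{\{l_1,l_2\}\in P}a_l=|P|\cdot\nloops(s^{[l]})-|P[s^{[l]}]|$ of $t_l$, dividing by the lower bound for $D(z)^{2+|P|}$ leaves the exponent $-2\nloops(s^{[l]})-|P[s^{[l]}]|$, and since $z_1\cdots z_L=\prod_{l=1}^{L} t_l^{|s^{[l]}|}$ by Lemma \ref{lemma_prod} applied with $T=\emptyset$, this is exactly the asserted inequality.

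To bound a single $B_{l_1l_2}(z)$, apply the triangle inequality to (\ref{eq_b}) together with Lemma \ref{lemma_prod}:
$$
|B_{l_1l_2}(z)|\le\sum_{T'}\ \prod_{l=1}^{L} t_l^{\,|s^{[l]}\backslash T'|},
$$
the sum running over the trees with cycle $T'$ that carry $l_1$ and $l_2$ on the cycle. The number of such $T'$ is bounded by a constant depending only on $G$, and $0<t_l\le 1$, so it suffices to check, for each such $T'$ and each index $l$, the inequality $|s^{[l]}\backslash T'|\ge a_l$. (If no such $T'$ exists, then $B_{l_1l_2}\equiv 0$ and there is nothing to prove.)

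This last inequality is the combinatorial core of the lemma and the step I expect to be the main obstacle. Put $s=s^{[l]}$ and write $|s\backslash T'|=|s|-|s\cap T'|$. A tree with cycle has cyclomatic number $1$, and nullity is monotone under edge deletion, so the subgraph on the edge set $s\cap T'$ has cyclomatic number $0$ or $1$, and it equals $1$ only when the unique cycle of $T'$ is entirely contained in $s$; since $l_1$ and $l_2$ lie on that cycle, this forces $\{l_1,l_2\}\subseteq s$. Consequently: if $\{l_1,l_2\}\not\subseteq s$, then $s\cap T'$ is acyclic, so $|s\cap T'|\le|s|-\nloops(s)$ and $|s\backslash T'|\ge\nloops(s)=a_l$; and in every case $s\cap T'$ spans at most one independent cycle within $s$, so its matroid rank is at least $|s\cap T'|-1$, whence $|s\cap T'|\le|s|-\nloops(s)+1$ and $|s\backslash T'|\ge\nloops(s)-1$, which covers the case $\{l_1,l_2\}\subseteq s$. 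Assembling these bounds as described above completes the proof; note that only the combinatorial structure of $D$ and of the $B_{l_1l_2}$ enters, and the UV-finiteness hypotheses of Section \ref{subsec_def} play no role here.
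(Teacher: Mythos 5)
Your proposal is correct and follows essentially the same route as the paper: the paper likewise reduces everything to Lemma \ref{lemma_prod}, Lemma \ref{lemma_d}, and the same case analysis on whether $\{l_1,l_2\}\subseteq s^{[l]}$ (cycle of $T'$ not contained in $s^{[l]}$ implies $T'\cap s^{[l]}$ acyclic, hence $|s^{[l]}\backslash T'|\geq\nloops(s^{[l]})$; otherwise removing one edge of the cycle gives $|s^{[l]}\backslash T'|\geq\nloops(s^{[l]})-1$). The only cosmetic difference is that the paper bounds each ratio $B_{l_1l_2}/D$ as a unit rather than separating all numerators from $D^{2+|P|}$.
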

\begin{proof}
Let us first remark that the right part equals
$$
C\cdot\prod_{l=1}^L t_l^{-2\cdot \nloops(s^{[l]})-|P[s^{[l]}]|}.
$$
Lemma \ref{lemma_d} states that 
$$
\frac{1}{D(z)^2} \leq C\cdot \prod_{l=1}^L t_l^{-2\cdot \nloops(s^{[l]})}
$$
for some $C$. 
Let us consider the contribution of the multipliers
\begin{equation}\label{eq_speer_lemma_b}
\frac{B_{j_1j_2}(z)}{D(z)}.
\end{equation}
Let us fix $j_1,j_2$ and $T'$ from (\ref{eq_b}). From Lemmas \ref{lemma_prod} and \ref{lemma_d} it follows that this tuple gives a contribution not exceeding (in absolute value)
$$
C\cdot\prod_{l=1}^L t_l^{a_l}
$$
for some $C$ to (\ref{eq_speer_lemma_b}), where 
$$
a_l=|s^{[l]}\backslash T'|-\nloops(s^{[l]}).
$$
If $j_1,j_2\in s^{[l]}$, then $T'$ becomes acyclic in $s^{[l]}$ after removing $j_1$; thus $a_l\geq -1$. If $j_1$ or $j_2$ is not in $s^{[l]}$, then the cycle of $T'$ is not contained in $s^{[l]}$, i.e., $T'\cap s^{[l]}$ is acyclic, $a_l\geq 0$.
This completes the proof.
\end{proof}

\subsection{Influence of the denominator}\label{subsec_denominator}

\begin{lemma}\label{lemma_w}
The following inequality is satisfied\footnote{This inequality works in both directions too.} for (\ref{eq_w_short_def}):
\begin{equation}\label{eq_lemma_w}
|W(z)|\geq C\cdot \max_{i\in \photon(E(G))} \frac{(z'_i)^2}{\max(z'_i,z_i)},
\end{equation}
where 
$$
z'_i=\max_{l\in\lpath(i)} z_l,
$$
$C>0$ is some constant depending only on the structure of the graph (and $m$).
\end{lemma}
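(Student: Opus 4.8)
The plan is to prove (\ref{eq_lemma_w}) entirely inside the electric-circuit picture, using the identity (\ref{eq_w_from_z}) together with Rayleigh's monotonicity law for two-terminal resistor networks. First I record the graph-theoretic structure that makes this possible: since $G$ has no lepton loops and every QED vertex meets exactly two lepton lines and one photon line, the lepton lines of $G$ form a single simple path $\mathcal{L}$ through all vertices of $G$, with endpoints the two external-lepton vertices $A$ and $B$. Hence $\lepton(E(G))$ is exactly the edge set of $\mathcal{L}$, the external-photon vertex is merely some vertex lying on $\mathcal{L}$, and for every $i\in\photon(E(G))$ the two vertices incident to $i$ lie on $\mathcal{L}$ and $\lpath(i)$ is precisely the subsegment of $\mathcal{L}$ joining them. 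Adjoining the photon lines as resistors of finite resistance $z_i$ to $\mathcal{L}$ only creates parallel paths, so by Rayleigh's law $Z(z)\le Z_0(z)$; consequently $W(z)\le 0$ and $|W(z)|=m^2\bigl(Z_0(z)-Z(z)\bigr)$.

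Next I would fix a photon line $i$ with incident vertices $a,b$, labelled so that $\mathcal{L}$ visits $A$, then $a$, then (through the lines of $\lpath(i)$) $b$, and finally $B$. Let $G_i$ be the network obtained from $G$ by deleting every photon line except $i$. Deleting edges cannot decrease the $A$--$B$ resistance, so $Z(z)\le Z_i(z)$, where $Z_i(z)$ denotes the $A$--$B$ resistance of $G_i$. But $G_i$ is just $\mathcal{L}$ with the single extra resistor $i$ placed in parallel over the segment $\lpath(i)$; writing $R=\sum_{l\in\lpath(i)}z_l$ for the series resistance of that segment and noting that the two remaining pieces of $\mathcal{L}$ have total resistance $Z_0(z)-R$, an elementary series/parallel computation gives
\[
Z_i(z)=\bigl(Z_0(z)-R\bigr)+\frac{R\,z_i}{R+z_i}=Z_0(z)-\frac{R^2}{R+z_i}.
\]
Therefore $|W(z)|\ge m^2\bigl(Z_0(z)-Z_i(z)\bigr)=m^2 R^2/(R+z_i)$.

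It then remains to pass from $R$ to $z'_i=\max_{l\in\lpath(i)}z_l$. Since $z'_i\le R$ and the map $t\mapsto t^2/(t+z_i)$ is nondecreasing for $t\ge 0$, we obtain
\[
\frac{R^2}{R+z_i}\ \ge\ \frac{(z'_i)^2}{z'_i+z_i}\ \ge\ \frac{1}{2}\,\frac{(z'_i)^2}{\max(z'_i,z_i)},
\]
so $|W(z)|\ge \tfrac{1}{2}m^2\,(z'_i)^2/\max(z'_i,z_i)$ for every $i$, and taking the maximum over $i\in\photon(E(G))$ yields (\ref{eq_lemma_w}) with $C=m^2/2$. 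The reverse inequality mentioned in the footnote follows from a similar but slightly longer estimate (one bounds the total resistance drop produced by all photon shortcuts against the largest single one, using $R\le|\lpath(i)|\,z'_i\le L z'_i$) and is not needed here. The one step I expect to require genuine care is the combinatorial claim at the very start — that the lepton edges of such a graph really do form a single simple path through all vertices and that $\lpath(i)$ is a contiguous segment of it, the external-photon vertex causing no trouble; once that is established, Rayleigh's monotonicity law and the one-variable inequalities above make the rest routine.
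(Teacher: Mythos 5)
Your proof is correct and follows essentially the same route as the paper's: both bound $|W(z)|=m^2(Z_0-Z)$ from below by $m^2(Z_0-Z_i)$ via resistance monotonicity after deleting all photon lines except $i$, and both reduce the resulting series/parallel expression $\bigl(\sum_{l\in\lpath(i)}z_l\bigr)^2/\bigl(\sum_{l\in\lpath(i)}z_l+z_i\bigr)$ to the stated form (the paper replaces the sum by the max up to a factor $|\lpath(i)|$, while you use monotonicity of $t\mapsto t^2/(t+z_i)$ — a cosmetic difference). The preliminary observation that the lepton lines form a single simple path is correct under the paper's hypotheses and is used implicitly there as well.
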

\begin{proof}
Take $i$ on which the maximum (\ref{eq_lemma_w}) is reached. By $Z_i(z)$ we denote the resistance between the points incident to the external lepton lines of the electric circuit with the graph that is obtained from $G$ by removing all photon lines except $i$ (and the resistances $z_l$).

It is obvious that
$$
Z(z)\leq Z_i(z)\leq Z_0(z)
$$
(we use $Z$ and $Z_0$ from (\ref{eq_w_from_z})).
Thus,
$$
|W(z)|\geq m^2(Z_0(z)-Z_i(z))=m^2\left(z''_i-\frac{z_i z''_i}{z_i+z''_i}\right) = m^2\frac{(z''_i)^2}{z''_i+z_i},
$$
where
$$
z''_i=\sum_{l\in\lpath(i)} z_l.
$$
The proof is completed taking into account that 
$$
\max_{i=1}^n a_i\leq \sum_{i=1}^n a_i\leq n\cdot \max_{i=1}^n a_i
$$ for any $a_1,\ldots,a_n\geq 0$.
\end{proof}

Let us describe an idea how to transform the estimation from Lemma \ref{lemma_w} to the form like $\prod t_l^{d_l}$. If $r_i\geq 0,i\in \photon(E(G))$ are some real numbers, $M=\sum_i r_i$, then we have
\begin{equation}\label{eq_denominator_prod}
|W(z)|^M\geq C\cdot \prod_{i\in \photon(E(G))} \left(\frac{(z'_i)^2}{\max(z'_i,z_i)}\right)^{r_i} = C\cdot\prod_{l=1}^L t_l^{a_l},
\end{equation}
where
$$
a_l=\sum_{i\in\photon(\iclos(s^{[l]}))} r_i + \sum_{i\in\iclos(s^{[l]})\backslash s^{[l]}} r_i.
$$

We continue to consider the terms corresponding to the fixed set $P$ of pairs and to use the definitions from Section \ref{subsec_numerator}.

\begin{lemma}\label{lemma_a_b}
For this selection of $r_i$ and $M$ the following inequality is satisfied:
\begin{equation}\label{eq_lemma_a_b}
\left|\frac{K(z)}{W(z)^M}\right|\leq C \frac{\prod_{l=1}^L t_l^{\lceil-\omega(\iclos(s^{[l]}))\rceil+A_l+B_l}}{z_1\ldots z_L},
\end{equation}
where
\begin{equation}\label{eq_al}
A_l=\lfloor \frac{|\lepton(s^{[l]})|}{2} \rfloor - |P(s^{[l]})| - \sum_{i\in\photon(\iclos(s^{[l]}))} r_i,
\end{equation}
\begin{equation}\label{eq_bl}
B_l=\sum_{i\in\iclos(s^{[l]})\backslash s^{[l]}} (1-r_i),
\end{equation}
by $\lfloor x\rfloor$ we denote the maximum integer $y\leq x$.
\end{lemma}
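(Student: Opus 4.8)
The plan is to obtain (\ref{eq_lemma_a_b}) by literally multiplying the bound of Lemma~\ref{lemma_speer} for $|K(z)|$ by the bound (\ref{eq_denominator_prod}) for $|W(z)|^{-M}$, and then to check that the exponent of each $t_l$ in the resulting product is exactly $\lceil-\omega(\iclos(s^{[l]}))\rceil+A_l+B_l$. Since $t_l\le 1$, (\ref{eq_denominator_prod}) gives $|W(z)|^{-M}\le C^{-1}\prod_l t_l^{-a_l}$ with $a_l=\sum_{i\in\photon(\iclos(s^{[l]}))}r_i+\sum_{i\in\iclos(s^{[l]})\setminus s^{[l]}}r_i$, so together with Lemma~\ref{lemma_speer},
$$
\left|\frac{K(z)}{W(z)^M}\right|\le C\,\frac{\prod_{l=1}^L t_l^{G_l}}{z_1\ldots z_L},\qquad
G_l=|s^{[l]}|-2\nloops(s^{[l]})-|P[s^{[l]}]|-a_l ,
$$
all constants depending only on $G$, $m$ and the fixed choice of the $r_i$. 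It then remains to prove $G_l=F_l:=\lceil-\omega(\iclos(s^{[l]}))\rceil+A_l+B_l$ (in fact $G_l\ge F_l$ would already suffice, again because $t_l\le1$, but equality holds).

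To compare the two, abbreviate $c=\iclos(s^{[l]})$, $s=s^{[l]}$. In $F_l-G_l$ the sums $\sum_{i\in\photon(c)}r_i$ cancel, $P(s)$ and $P[s]$ are the same set and cancel, and $\sum_{i\in c\setminus s}(1-r_i)+\sum_{i\in c\setminus s}r_i=|c|-|s|$; what is left is
$$
F_l-G_l=\lceil-\omega(c)\rceil+\big\lfloor\tfrac12|\lepton(c)|\big\rfloor+2\nloops(s)-2|s|+|c| .
$$
The one genuinely structural (rather than purely book‑keeping) step — and the one I would single out as the heart of the argument — is the behaviour of $\nloops$ and $\omega$ under the I-closure. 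Each $i\in c\setminus s$ is a photon line with $\lpath(i)\subseteq s$, so its two endpoints are already joined inside $s$; adjoining $i$ therefore raises the number of independent cycles by one and leaves the number of connected components unchanged, whence $\nloops(c)=\nloops(s)+|c\setminus s|$, while $|c|=|s|+|c\setminus s|$ and $\lepton(c)=\lepton(s)$ because the adjoined lines are photons. Consequently $\omega(c)=\omega(s)+|c\setminus s|$.

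Substituting these three identities (using $|c\setminus s|\in\mathbb Z$, so $\lceil-\omega(c)\rceil=\lceil-\omega(s)\rceil-|c\setminus s|$, and $2\nloops(s)-|s|=\omega(s)-\tfrac12|\lepton(s)|$) collapses the expression to
$$
F_l-G_l=\big(\lceil-\omega(s)\rceil+\omega(s)\big)+\big(\big\lfloor\tfrac12|\lepton(s)|\big\rfloor-\tfrac12|\lepton(s)|\big).
$$
Finally one observes that $\omega(s)=2\nloops(s)-|s|+\tfrac12|\lepton(s)|$ is an integer when $|\lepton(s)|$ is even and a half‑integer when $|\lepton(s)|$ is odd: in the even case both brackets vanish, and in the odd case the first equals $\tfrac12$ and the second $-\tfrac12$. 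Either way $F_l-G_l=0$, which is (\ref{eq_lemma_a_b}). Thus the lemma is a purely combinatorial corollary of Lemma~\ref{lemma_speer}, the estimate (\ref{eq_denominator_prod}), and the two I-closure identities above; the only point requiring care is keeping the rounding functions $\lceil\cdot\rceil$ and $\lfloor\cdot\rfloor$ synchronised with the parity of $|\lepton(s^{[l]})|$.
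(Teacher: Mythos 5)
Your proposal is correct and follows essentially the same route as the paper: multiply the bound of Lemma~\ref{lemma_speer} by (\ref{eq_denominator_prod}) and verify the exponent identity via $\omega(\iclos(s))=\omega(s)+|\iclos(s)\backslash s|$ (the paper justifies this the same way, by noting the I-closure preserves the number of connectivity components). The only cosmetic difference is that you carry out the parity bookkeeping between $\lceil-\omega\rceil$ and $\lfloor|\lepton|/2\rfloor$ at the end, whereas the paper absorbs it into a rewriting of the exponent from Lemma~\ref{lemma_speer}.
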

\begin{proof}
The power of $t_l$ is constituted of the following terms:
\begin{itemize}
\item 
$$
\lceil-\omega(s^{[l]})\rceil + \lfloor |\lepton(s^{[l]})|/2 \rfloor - |P[s^{[l]}]|
$$ 
(Lemma \ref{lemma_speer}; this lemma also gives the denominator of the right part of (\ref{eq_lemma_a_b});
\item  
$$
-\sum_{i\in\photon(\iclos(s^{[l]}))} r_i - \sum_{i\in\iclos(s^{[l]})\backslash s^{[l]}} r_i
$$
(from (\ref{eq_denominator_prod}).
\end{itemize}
To conclude the proof it remains to note that for any $s\subseteq E(G)$ we have
$$
\omega(\iclos(s))=\omega(s)+2(\nloops(\iclos(s))-\nloops(s))-|\iclos(s)\backslash s|$$ $$=\omega(s) + |\iclos(s)\backslash s|
$$
(because the I-closure does not change the number of connectivity components of the set).
\end{proof}

\begin{lemma}\label{lemma_photon_iclos_ineq}
For any $s\subseteq E(G)$ we have
$$
|\photon(\iclos(s))|\leq \lfloor |\lepton(s)|/2 \rfloor.
$$
\end{lemma}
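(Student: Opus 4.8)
The plan is to reduce everything to the structure of the lepton path and then to a packing estimate coming from the absence of ultraviolet divergences. Since $G$ has no lepton cycles and exactly two external lepton lines, $\lepton(E(G))$ is a single path; I would fix an enumeration $v_0,v_1,\ldots$ of its vertices along this path so that each lepton line joins consecutive $v_k$'s, every photon line joins two of these vertices, and $\lpath(i)$ is the set of lepton lines strictly between the endpoints of $i$. Since $\lepton(\iclos(s))=\lepton(s)$ (the I-closure only adds photon lines), I may work throughout with the lepton content of $s$; write $\lepton(s)$ as the disjoint union of its maximal runs of consecutive lepton lines (``intervals'') $I_1,\ldots,I_m$, with $n_j=|I_j|$ and $\sum_j n_j=|\lepton(s)|$.

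The heart of the argument is a packing bound. First, no photon line joins two adjacent vertices $v_{k-1},v_k$: otherwise that photon together with the lepton line between them would be a connected subgraph $s'$ with $\nloops(s')=1$, $|s'|=2$, $|\lepton(s')|=1$, hence $\omega(s')=1/2>0$, while $s'$ is necessarily proper since every $G$ under consideration has at least three internal lines — contradicting the hypothesis. Thus $|\lpath(i)|\geq 2$ for every photon. Second, for any interval $I$ the set $\iclos(I)$ is connected (a path together with chords joining its vertices), and it is either $E(G)$ or a proper nonempty subgraph; in both cases $\omega(\iclos(I))\leq 0$ (by hypothesis in the proper case, and because $\omega(E(G))=2-\frac{3}{4}\cdot 2-\frac{1}{2}=0$ otherwise). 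Now the photons belonging to $\iclos(I)$ are exactly those with both endpoints among the $|I|+1$ vertices of $I$; calling their number $p$, each uses two distinct vertices (a vertex carries one photon half-edge), so $\nloops(\iclos(I))=p$ and $\omega(\iclos(I))=2p-(|I|+p)+\frac{1}{2}|I|=p-|I|/2\leq 0$, i.e. $p\leq\lfloor |I|/2\rfloor$. Applying this to $I_1,\ldots,I_m$, and noting that a photon $i$ with $\lpath(i)\subseteq\lepton(s)$ has its (consecutive) lepton path, hence both endpoints, inside a single $I_j$, the number of such photons is at most $\sum_j\lfloor n_j/2\rfloor\leq\lfloor(\sum_j n_j)/2\rfloor=\lfloor|\lepton(s)|/2\rfloor$.

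This already controls the photon lines that $\iclos$ adds to $s$; what remains — and this is the step I expect to be the real obstacle — is to account for photon lines already in $\photon(s)$ whose lepton path is \emph{not} contained in $\lepton(s)$, which also lie in $\photon(\iclos(s))$ but escape the interval estimate. My plan here is to induct on the number of such ``bridging'' photons: deleting one, say $i$, leaves a set that is still I-closed with the same lepton content, so the photon count drops by exactly one, and to close the induction I would match each bridging photon to a distinct unit of unused slack $\lfloor n_j/2\rfloor-p_j$ in the interval estimate. The input for the matching is that $i$ together with the photons contained in the span of its endpoints forms a connected subgraph to which $\omega\leq 0$ (equivalently the estimate $p-r/2\leq 0$ with $r=|\lpath(i)|$) applies, and that $i$ either joins two different intervals of $\lepton(s)$ or attaches a fresh vertex to one of them, which in the extremal configurations forces interval room to be given up. Making this matching simultaneous — absorbing all bridging photons at once rather than peeling them off one at a time — is the delicate point, and it is precisely where the global strength of the ``no ultraviolet-divergent subgraph'' hypothesis (beyond its consequence for intervals alone) has to be used.
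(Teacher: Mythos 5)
Your interval argument is correct and takes a genuinely different (and arguably cleaner) route to the same bound than the paper does. The paper also decomposes $\lepton(s)$ into its connected lepton subpaths $h_j$, but then argues by vertex counting: each of the $|h_j|+1$ vertices carries exactly one photon half\-/edge, at least one of these vertices must be occupied by a photon with only one end on $h_j$ (otherwise $\iclos(h_j)$ would be a lepton self-energy subgraph with $\omega=1/2>0$, contradicting the hypothesis), so at most $|h_j|/2$ photons have both ends on $h_j$. You instead compute $\omega(\iclos(I_j))=p_j-|I_j|/2\leq 0$ directly from the no-UV-subdivergence hypothesis, which packages the same input into one inequality and avoids the case split; your preliminary observation that no photon joins adjacent lepton vertices is not actually needed for this. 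Both arguments bound exactly the same quantity, namely the number of photons $i$ with $\lpath(i)\subseteq\lepton(s)$, i.e.\ with both endpoints on the lepton part of $s$.

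The step you flag as ``the real obstacle'' --- absorbing the photons of $\photon(s)$ whose lepton path is \emph{not} contained in $\lepton(s)$ --- cannot be closed, because for those photons the inequality is false as literally stated: in the one-loop vertex graph with lepton lines $1,2$ and photon line $3$ one has $\iclos(\{3\})=\{3\}$, hence $|\photon(\iclos(\{3\}))|=1$ while $\lfloor|\lepton(\{3\})|/2\rfloor=0$. The paper's own proof tacitly makes the same restriction you did (it only counts photons ``with ends on $h_j$''), so it does not establish the literal statement either. What rescues the downstream argument is that such bridging photons should never have entered the exponent $a_l$ in (\ref{eq_denominator_prod}) in the first place: the $t_l$-exponent of $(z'_i)^2/\max(z'_i,z_i)$ vanishes unless $\lpath(i)\subseteq s^{[l]}$, so the quantity Lemma~\ref{lemma_r} actually needs to control is $\sum_{i:\,\lpath(i)\subseteq s^{[l]}}r_i$, and the bound you proved, $|\{i\in\photon(E(G)):\lpath(i)\subseteq s\}|\leq\lfloor|\lepton(s)|/2\rfloor$, is precisely the correct and sufficient statement. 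So the right fix is not to complete your third paragraph but to delete it, restating the lemma with $\{i\in\photon(E(G)):\lpath(i)\subseteq s\}$ in place of $\photon(\iclos(s))$.
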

\begin{proof}
It is enough only to prove that 
$$
|\photon(\iclos(s))| \leq |\lepton(s)|/2.
$$
For proving this we consider $\lepton(s)$ as the union of paths $h_1,\ldots,h_l$ (that are not connected). There are two cases:
\begin{itemize}
\item For each path $h_j$ there exists a photon line of $G$ (internal or external) with exactly one vertex on $h_j$. Taking to account that $h_j$ has $|h_j|+1$ vertexes, and one vertex is already occupied by that photon, there are no more than $|h_j|/2$ photons with ends on $h_j$ contributing to $|\photon(\iclos(s))|$.
\item 
There exists a path $h_j$ without external photon lines in regard to it. Then $\iclos(h_j)$ forms a lepton self-energy subgraph\footnote{If $G$ contains lepton self-energy subgraphs, then all considerations with UV degrees of divergence crash down: IR divergences of power type emerge in this case; this contradicts to all estimations based on UV degrees of divergence; these divergences transform to the logarithmic ones after subtraction of the self-mass part; see Discussion in ~\cite{volkov_2015} and the references in it.}, i.e, $\omega(\iclos(h_j))=1/2>0$; this contradicts to the restrictions on $G$. 
\end{itemize}
\end{proof}

\begin{lemma}\label{lemma_r}
There exist numbers $r_i\in\{0,1\}$ for $i\in \photon(E(G))$ such that $\sum_i r_i=L+u[P]$ and for any $l=1,\ldots,L$ we have
\begin{equation}\label{eq_lemma_r}
\sum_{i\in \photon(\iclos(s^{[l]}))} r_i \leq \lfloor \lepton(s^{[l]})/2 \rfloor - |P[s^{[l]}]|.
\end{equation}
\end{lemma}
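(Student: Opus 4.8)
The goal is to exhibit a choice of $r_i\in\{0,1\}$ (for each internal photon line $i$) with total weight $\sum_i r_i=L+u[P]$ satisfying the family of constraints (\ref{eq_lemma_r}). The natural approach is to set $r_i=1$ precisely for the paired photon lines — i.e., for those $i$ that play the role of the ``contracted'' photon attached to a pair in $P$ — together with enough additional photon lines to reach the required total. The key structural input is that the number of pairs $|P|$ and the exponent $u[P]$ are linked: transferring to Feynman parameters replaces the homogeneity degree $u_j$ by $L+u_j$, and the degree $u[P]$ of the rational part associated with $P$ is determined by how many global multipliers $B_{l_1l_2}(z)/D(z)$ and factors $1/D(z)^2$ occur, so that $L+u[P]$ counts exactly the photon-like denominators available. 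First I would pin down the identity $L+u[P]=|P|+(\text{number of unpaired-lepton contributions})$ or the analogous bookkeeping relation, so that the target $\sum_i r_i=L+u[P]$ becomes a statement about covering the pairs plus a controlled surplus.

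**Key steps.** (1) Compute $u[P]$ explicitly from the construction rules of Section~\ref{subsec_constr}: each pair contributes one factor of $D(z)^{-1}$ and the global $D(z)^{-2}$ is fixed, while each unpaired lepton line contributes a factor $\hat Q_l/D$; tallying homogeneity degrees yields $L+u[P]$ as a nonnegative integer expressible through $|P|$, $|\lepton(E(G))|$, and $\nloops(E(G))$. (2) Observe that for every $s=s^{[l]}$, the quantity $\lfloor|\lepton(s)|/2\rfloor-|P[s]|$ is nonnegative — indeed $|P[s]|\le|\lepton(s)|/2$ because the pairs in $P[s]$ are disjoint pairs of lepton lines inside $s$ — and that Lemma~\ref{lemma_photon_iclos_ineq} gives $|\photon(\iclos(s))|\le\lfloor|\lepton(s)|/2\rfloor$. (3) Choose $r_i=1$ for all $i\in\photon(\iclos(E(G)))$ that lie on contracted pairs, and distribute the remaining $r_i=1$'s among other photon lines so as to never overshoot any local bound; the crucial monotonicity is that as $l$ decreases, $s^{[l]}$ grows, so $\iclos(s^{[l]})$ grows and both sides of (\ref{eq_lemma_r}) are monotone — this lets one build the assignment greedily from $l=L$ down to $l=1$, adding a photon line with $r_i=1$ only when the right-hand side has strictly increased enough to accommodate it. (4) Verify the greedy process terminates with exactly $L+u[P]$ ones by matching the total increment of $\lfloor|\lepton(s^{[l]})|/2\rfloor-|P[s^{[l]}]|$ over the full chain against the known value of $L+u[P]$ from step~(1).

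**Main obstacle.** The delicate point is step~(4): showing that the greedy assignment, which respects all $L$ local inequalities (\ref{eq_lemma_r}), also hits the exact global total $\sum_i r_i=L+u[P]$ rather than falling short. This requires a precise identity relating $u[P]$ — a datum coming from the differentiation combinatorics that produced the numerator terms — to the telescoping sum $\sum_{l}\bigl(|\photon(\iclos(s^{[l]}))|-|\photon(\iclos(s^{[l+1]}))|\bigr)$ type expression, i.e. one must know that the ``capacity'' created along the Hepp chain is never wasted. I expect this to follow from the observation that $u[P]$ already equals (up to the fixed $D^{-2}$) the count of pairs plus half the unpaired lepton lines minus a loop-number correction, which is exactly $|\photon(\iclos(E(G)))|$ adjusted by the restriction $\omega(E(G))<0$ — so the surplus photon lines needed beyond the paired ones are exactly those available, and no padding is lost. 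Equivalently, one can phrase step~(3)–(4) as a Hall-type / flow feasibility argument: the bipartite demand system (photon lines versus Hepp-chain constraints) satisfies the deficiency condition precisely because of Lemma~\ref{lemma_photon_iclos_ineq} and the absence of lepton self-energy subgraphs, and the max-flow value equals $L+u[P]$.
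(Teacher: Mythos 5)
Your plan has the right shape (use Lemma \ref{lemma_photon_iclos_ineq}, exploit the nesting of the sets $s^{[l]}$ along the Hepp chain, build $r$ greedily), but three concrete problems remain. First, the seed of your construction rests on a misreading: the elements of $P$ are unordered pairs of internal \emph{lepton} lines coming from the differentiation combinatorics of the numerator, so there is no ``contracted photon attached to a pair'' to which you could assign $r_i=1$. Second, the identity needed in your step (1) is simply $L+u[P]=|\photon(E(G))|-|P|$ (each pair contributes a factor $B_{l_1l_2}/D$ of homogeneity degree $-1$, the global $1/D^2$ contributes $-2\nloops(E(G))$, and $\omega(E(G))=0$ converts $L-2\nloops(E(G))$ into $|\lepton(E(G))|/2=|\photon(E(G))|$); you never pin this down, and without it the target total cannot be matched. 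Third, and most seriously, the monotonicity you invoke in step (3) is false: as $l$ decreases and $s^{[l]}$ grows, the right-hand side $\lfloor|\lepton(s^{[l]})|/2\rfloor-|P[s^{[l]}]|$ can strictly \emph{decrease} -- add a lepton line whose partner under $P$ is already present while $|\lepton(s^{[l+1]})|$ is even: the floor does not move but $|P[s^{[l]}]|$ jumps by one. A bottom-up greedy that saturates the capacity at some $l$ can therefore be forced into a violated constraint at a smaller $l$, while a conservative greedy faces exactly the shortfall you flag in step (4); your Hall/max-flow sketch is not carried out and so does not close this gap.

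The paper closes it by complementation. Using Lemma \ref{lemma_photon_iclos_ineq} and $2|P[s]|\leq|\lepton(s)|$, it suffices to enforce $\sum_{i\in\photon(\iclos(s^{[l]}))}(1-r_i)\geq\min\bigl(|P[s^{[l]}]|,|\photon(\iclos(s^{[l]}))|\bigr)$, whose right-hand side \emph{is} monotone nondecreasing in $s^{[l]}$. One then starts from $r_i\equiv 1$ and, for $l=L,\ldots,1$, flips just enough $r_i$ with $i\in\photon(\iclos(s^{[l]}))$ to zero to meet this bound with equality; flips never spoil the constraints already handled (they only increase left-hand sides of lower-bound constraints), an induction shows that at most $|P|$ flips ever occur, and the step $l=1$ (where $\iclos(s^{[1]})=E(G)$ and $|P|\leq|\photon(E(G))|$) forces exactly $|P|$ of them, giving $\sum_i r_i=|\photon(E(G))|-|P|=L+u[P]$. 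Some such device -- either this complementation with the monotone reformulated bound, or an actual deficiency computation for your flow formulation -- is needed to make your step (4) go through.
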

\begin{proof}
First, we note that 
$$
L+u[P]=|\photon(E(G))|-|P|
$$ (by construction, see Section \ref{subsec_constr_schwinger}).
Taking into account Lemma \ref{lemma_photon_iclos_ineq}, it is sufficient to construct $r$ satisfying
\begin{equation}\label{eq_lemma_r_modified}
\sum_{i\in \photon(\iclos(s^{[l]}))} (1-r_i) \geq \min(|P[s^{[l]}]|, |\photon(\iclos(s^{[l]}))|)
\end{equation}
instead of (\ref{eq_lemma_r}). To construct $r$ we start from $r_i=1$ for all $i$ and perform the following operation sequentially for $l=L,L-1,\ldots,1$:
\begin{itemize}
\item if (\ref{eq_lemma_r_modified}) is not satisfied for $l$, then change some $r_i$ for $i\in\photon(\iclos(s^{[l]}))$ to zero to satisfy
$$
\sum_{i\in \photon(\iclos(s^{[l]}))} (1-r_i) = \min(|P[s^{[l]}]|, |\photon(\iclos(s^{[l]}))|).
$$
\end{itemize}
It is clear that performing this operation for a given $l$ will not spoil the verity of (\ref{eq_lemma_r_modified}) for the greater $l$ (because $r_i$ are only decreased). Also, it is easy to see that the following condition is satisfied after performing the operation for $l$: $r_i=1$ for all $i\notin \photon(\iclos(s^{[l]}))$ and $\sum_i (1-r_i)\leq |P|$ (this can be proved by induction with respect to $l$). This guarantees that $\sum_i r_i=|\photon(E(G))|-|P|$ after performing all operations (because $|P|\leq |\photon(E(G))|$).
\end{proof}
Lemmas \ref{lemma_a_b} and \ref{lemma_r} lead to
$$
\left|\frac{K(z)}{W(z)^{L+u[P]}}\right|\leq C \frac{\prod_{l=1}^L t_l^{\lceil-\omega(\iclos(s^{[l]}))\rceil}}{z_1\ldots z_L}.
$$
However, this is insufficient for proving the convergence of the integral, because the powers must be positive for all $l\geq 2$, but $\iclos(s)=E(G)$ for some $s\neq E(G)$ and $\omega(E(G))=0$. Thus, a more meticulous treatment is required to make the powers positive.

\subsection{Influence of the magnetic moment projector and making the degrees positive}\label{subsec_projector}

We will estimate the influence of $\prj$ for the case when $|P|=0$ (see the definitions in Section \ref{subsec_numerator}). It is more convenient to consider the Feynman amplitude as a function of $p_1,p_2$, where $p_1=p-q/2$, $p_2=p+q/2$. We will prove the following two lemmas for this estimation.
\begin{lemma}\label{lemma_amm_mpp}
The following equality is satisfied:
$$
\prj\left[\prod_{j=l}^1\left(\gamma_{\xi_j}(m+\hat{p}_2)\right) \cdot \gamma_{\mu} \cdot \prod_{j=1}^n \left((m+\hat{p}_1)\gamma_{\nu_j}\right)\right] =0,
$$
where $\xi_1,\ldots,\xi_l,\nu_1,\ldots,\nu_n$ are \emph{names} of tensor indices; each name occurs twice in the selection; the convolution is performed over pairs of the same tensor indices using $g_{\mu\nu}$.
\end{lemma}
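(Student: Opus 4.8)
The plan is to show that, when sandwiched between the on-shell spinors, the bracketed expression is a Lorentz scalar times $\gamma_\mu$, and then to invoke $\prj\gamma_\mu=0$. The only inputs needed are the on-shell Dirac equations $\bar\psi_2(\hat p_2-m)=0$ and $(\hat p_1-m)\psi_1=0$, which hold identically in $p$ and $q$ (with $p_1=p-q/2$, $p_2=p+q/2$), and the Clifford relation $\gamma_\alpha\gamma_\nu+\gamma_\nu\gamma_\alpha=2g_{\alpha\nu}$.

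First I would record the two ``peeling'' identities: for any tensor name $\alpha$,
$$\bar\psi_2\,\gamma_\alpha(m+\hat p_2)=2p_{2\alpha}\,\bar\psi_2,\qquad (m+\hat p_1)\gamma_\alpha\,\psi_1=2p_{1\alpha}\,\psi_1.$$
Each follows in one line: commute $\hat p_2$ through $\gamma_\alpha$ (resp. $\hat p_1$ through $\gamma_\alpha$) using the Clifford relation, producing the term $2p_{2\alpha}$ (resp. $2p_{1\alpha}$) plus a leftover $\hat p_2\gamma_\alpha$ (resp. $\gamma_\alpha\hat p_1$), which is converted into $m\gamma_\alpha$ by the Dirac equation and exactly cancels the $m\gamma_\alpha$ already present. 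Applying the first identity $l$ times from the left and the second $n$ times from the right to $\bar\psi_2[\cdots]\psi_1$, and observing that every factor $p_{2\xi_j}$ or $p_{1\nu_j}$ so produced is a number that commutes past everything, collapses the sandwich to
$$2^{l+n}\Bigl(\prod_{j=1}^l p_{2\xi_j}\Bigr)\Bigl(\prod_{j=1}^n p_{1\nu_j}\Bigr)\,\bar\psi_2\gamma_\mu\psi_1.$$

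It then remains to perform the contractions. Since $\mu$ is the free external index while each name among $\xi_1,\dots,\xi_l,\nu_1,\dots,\nu_n$ occurs exactly twice, contracting with $g^{\alpha\beta}$ pairs the momentum factors into dot products from $\{p_1^2,p_2^2,p_1\cdot p_2\}$ and leaves no free index. Hence, modulo terms vanishing between the on-shell spinors, the bracketed expression equals $C\gamma_\mu$ with $C$ a Lorentz scalar (on shell it is a function of $q^2$ alone, since $p_1^2=p_2^2=m^2$ and $p_1\cdot p_2=m^2-q^2/2$). A scalar prefactor does not alter the Dirac structure, so in the form-factor decomposition of $C\gamma_\mu$ the Pauli coefficient is identically zero, whence $\prj[C\gamma_\mu]=0$, which is the claim. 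The computation is elementary; the points that need a little care — and are the only real obstacle — are that the peeling identities must be genuine identities in $(p,q)$, so that they are respected by the derivatives implicit in the definition of $\prj$ (this is guaranteed because the Dirac equations for $\psi_1,\psi_2$ hold for all $q$), and that $\mu$ must be excluded from the list of contracted names, so that $C$ really carries no tensor index able to mix with $\mu$.
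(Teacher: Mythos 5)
Your proof is correct, and it rests on the same elementary identity as the paper's own argument---$\gamma_\alpha(m+\hat{p}_2)=(m-\hat{p}_2)\gamma_\alpha+2p_{2\alpha}$ combined with the on-shell Dirac equations---but it organizes the reduction differently and, in fact, more cleanly. The paper peels one factor at a time and \emph{immediately} contracts the resulting $p_{2\xi_l}$ (or $p_{1\nu_j}$) with its partner $\gamma$ inside the remaining string; this forces a case analysis on where that partner sits (left of $\gamma_\mu$, right of $\gamma_\mu$, or at the very end of the product), requires the auxiliary sandwich identities $(\hat{p}_2+m)\hat{p}_2(\hat{p}_2+m)=2m^2(\hat{p}_2+m)$ and $(\hat{p}_1+m)\hat{p}_2(\hat{p}_1+m)=(2p_1p_2)(\hat{p}_1+m)$ together with the insertion trick $\prj[\Gamma]=\frac{1}{2m}\prj[(\hat{p}_2+m)\Gamma]$, and is assembled into an induction on $l+n$ terminating at $\prj\gamma_\mu=0$. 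You instead peel all $l+n$ factors at once from the two ends and \emph{defer} the contractions until every named $\gamma$ has become a commuting momentum component; the pairings then collapse into the scalars $p_1^2$, $p_2^2$, $p_1\cdot p_2$, the bracket becomes $C(q^2)\gamma_\mu$ on shell, and the Pauli form factor of such an expression vanishes identically, so $\prj$ kills it. This buys a one-pass argument with no induction and no case analysis. The two points that genuinely need care are exactly the ones you flag---the peeling identities must hold identically on the mass shell so that the $q$-derivatives implicit in the definition of $\prj$ respect them, and $\mu$ must be excluded from the contracted names so that the prefactor is a true scalar---and both are satisfied, so the argument is complete.
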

\begin{proof}
First, let us note that in the definition of $\prj$ we consider only expressions of the form $\overline{\psi}_2 \Gamma(p_1,p_2)\psi_1$, where 
$$
(m-\hat{p}_1)\psi_1=0,\quad \overline{\psi}_2 (m-\hat{p}_2)=0,\quad p_1^2=p_2^2=m^2.
$$
We will use the formula
\begin{equation}\label{eq_lemma_amm_mpp_mmp}
\gamma_{\xi_l}(m+\hat{p}_2)=(m-\hat{p}_2)\gamma_{\xi_l}+2p_{2\xi_l}.
\end{equation}
The part of the expression corresponding to the first term of (\ref{eq_lemma_amm_mpp_mmp}) right part is cancelled, as it was noted above. For considering the second term, we perform the convolution of $p_{2\xi_l}$ with the corresponding multiplier $\gamma_{\ldots}$. The convolution changes this $\gamma_{\ldots}$ to the corresponding $p_{2\ldots}$. If this $p_{2\ldots}$ is to the left side of $\gamma_{\mu}$, then we use 
$$
(\hat{p}_2+m)\hat{p}_2(\hat{p}_2+m) = 2m^2 (\hat{p_2}+m)
$$
and reduce the problem to the statement of the lemma with lesser $l,n$. If this $p_{2\ldots}$ occurred in the begin of the expression, we use that 
$$
\prj[\Gamma]=\frac{1}{2m}\prj[(\hat{p_2}+m)\Gamma]
$$ in addition (this follows from the note in the beginning). The case when the $p_{2\ldots}$ occurs to the right side of $\gamma_{\mu}$ is considered analogously, using
$$
(\hat{p}_1+m)\hat{p}_2(\hat{p}_1+m) = (2p_1p_2)(\hat{p_1}+m),\quad \prj[\Gamma]=\frac{1}{2m}\prj[\Gamma(\hat{p_1}+m)]
$$
and the fact that $2p_1p_2=2m^2$ can be factorized out of $\prj[\ldots]$ (see Section \ref{subsec_constr_feynman_amm}).

If $l=0$, then we perform the analogous transformation from the right end of the product inside $\prj[\ldots]$. If $l=0,n=0$, we use $\prj\gamma_{\mu}=0$. This completes the proof.
\end{proof}

\begin{lemma}\label{lemma_amm}
If $|P|=0$, then the following inequality is satisfied in terms of (\ref{eq_global_multiplier_factor_out}):
\begin{equation}\label{eq_lemma_amm}
\left| \sum_{j:P_j=P} [\prj \Pi_j] Y_j(z) \right| \leq C \cdot \max_{i\in \photon(E(G))} \frac{z'_i}{\max(z'_i,z_i)},
\end{equation}
where $C$ is some constant that depends only on the structure of the graph (and $m$), $z'_i$ are defined in Lemma \ref{lemma_w}.
\end{lemma}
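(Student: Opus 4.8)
The plan is to recognize the left-hand side of (\ref{eq_lemma_amm}) as $|\prj[N(z)]|$ for a single Dirac string $N(z)$, to split $N(z)$ into a part annihilated by $\prj$ through Lemma \ref{lemma_amm_mpp} and a remainder of order $\max_i z'_i/\max(z'_i,z_i)$, and to finish using that $\prj$ is a bounded linear operation. When $|P|=0$ the only global multiplier in (\ref{eq_global_multiplier_factor_out}) is $1/D(z)^2$, so $Y_j(z)=D(z)^2R_j(z)$; since $R_j$, and hence $Y_j$, does not depend on the external momenta, $\prj$ commutes with it and
$$
\sum_{j:P_j=P}[\prj\Pi_j]Y_j(z)=D(z)^2\,\prj\Bigl[\sum_{j:P_j=P}\Pi_jR_j\Bigr]=\prj[N(z)],
$$
where $N(z)$ is the numerator read along the (unique, because $G$ has no lepton loops) lepton path from one external lepton line to the other: a factor $\gamma$ at every vertex, a factor $\gamma_\mu$ at the vertex incident to the external photon line, a factor $m+\hat{Q}_\ell(z)/D(z)$ for every internal lepton line $\ell$, and the tensor indices of the two $\gamma$'s at the ends of each internal photon line contracted by $g_{\mu\nu}$.

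I would then decompose the lepton momenta by the electric circuit analogy. Because $G$ has no lepton loops, its internal lepton lines form a spanning-tree path of $G$ and the internal photon lines are its chords, the fundamental cycle of a chord $i$ being $\{i\}\cup\lpath(i)$; the mesh currents in this basis are exactly the currents $J_i(z)$ through the internal photon lines, so Kirchhoff's laws give
$$
\frac{\hat{Q}_\ell(z)}{D(z)}=\hat{p}^{(0)}_\ell+\sum_{i:\ \ell\in\lpath(i)}\pm\hat{J}_i(z),
$$
where $p^{(0)}_\ell\in\{p_1,p_2\}$ is the momentum through $\ell$ when all photon currents vanish, namely $p_2$ on the $\overline{\psi}_2$-side of the external photon vertex and $p_1$ on the $\psi_1$-side. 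Between its endpoints each photon line $i$ is in parallel with a subnetwork of effective resistance at most $\sum_{l\in\lpath(i)}z_l\le|\lpath(i)|\,z'_i$ (parallel branches only lower the resistance), so the fraction of any injected current reaching $i$ is at most $\min(1,\,\sum_{l\in\lpath(i)}z_l/z_i)\le|\lpath(i)|\cdot z'_i/\max(z'_i,z_i)$; since the injected external currents are bounded on the mass shell, $|J_i(z)|\le C\,z'_i/\max(z'_i,z_i)$, hence, writing $\hat{\delta}_\ell=\sum_{i:\ \ell\in\lpath(i)}\pm\hat{J}_i$, we get $|\hat{\delta}_\ell|\le C\cdot M(z)$ where $M(z):=\max_{i\in\photon(E(G))}z'_i/\max(z'_i,z_i)$.

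Now substitute $m+\hat{Q}_\ell/D=(m+\hat{p}^{(0)}_\ell)+\hat{\delta}_\ell$ into each internal-lepton-line factor of $N(z)$ and expand the product. The term in which every factor contributes $m+\hat{p}^{(0)}_\ell$ is a $z$-independent string of exactly the form treated in Lemma \ref{lemma_amm_mpp} (the factors $m+\hat{p}_2$ on the $\overline{\psi}_2$-side and $m+\hat{p}_1$ on the $\psi_1$-side of $\gamma_\mu$, the $\gamma$'s contracted pairwise by the internal photon lines), so $\prj$ annihilates it. Every remaining term contains at least one factor $\hat{\delta}_\ell$; its other factors are bounded in operator norm by graph-dependent constants on the mass shell, and since $M(z)\le 1$ the whole term has operator norm $\le C\cdot M(z)$; there are only finitely many such terms. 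Thus $N(z)=N^{(0)}+\Delta(z)$ with $\prj[N^{(0)}]=0$ and $\|\Delta(z)\|\le C\cdot M(z)$, uniformly in $z$ and in the on-shell external momenta. Finally, $\prj$ is a bounded linear functional on the finite-dimensional space of Dirac-valued polynomials in the external momenta of degree at most the number of lepton lines: by its explicit form (cf. the reference in Section \ref{subsec_constr_feynman_amm}) it extracts a fixed finite linear combination of Taylor coefficients at $q=0$, $p^2=m^2$, with no divisions. Therefore $|\prj[N(z)]|=|\prj[\Delta(z)]|\le C\cdot M(z)$, which is (\ref{eq_lemma_amm}).

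The step I expect to be the main obstacle is making the circuit estimate $|J_i(z)|\le C\,z'_i/\max(z'_i,z_i)$ fully rigorous — phrasing the current-divider / Rayleigh-monotonicity argument so that it is valid for an arbitrary passive resistive network carrying the prescribed external currents — together with the bookkeeping identifying the internal photon lines with a fundamental cycle basis and the unperturbed lepton momenta with $p_1$ and $p_2$. The remaining ingredients, namely the algebraic vanishing $\prj[N^{(0)}]=0$ and the boundedness of $\prj$, are either supplied by Lemma \ref{lemma_amm_mpp} or essentially immediate.
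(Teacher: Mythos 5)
Your proof has the same skeleton as the paper's: you identify the sum as $\prj$ applied to a single Dirac string along the lepton path, expand each factor as $m+\hat{Q}_\ell(z)/D(z)=(m+\hat{q}_\ell)+(\text{remainder})$ with $\hat{q}_\ell\in\{\hat{p}_1,\hat{p}_2\}$, kill the leading term with Lemma \ref{lemma_amm_mpp}, and bound every surviving term by a factor $z'_i/\max(z'_i,z_i)$. The only real divergence is in how the remainder is estimated. The paper works directly with the spanning-tree representation of $Q_\ell(z)-q_\ell D(z)$: it observes which $1$-trees cancel, and for each surviving tree $T$ it picks a photon line $i\in T$, swaps it for a lepton line $j\in\lpath(i)$ to get a second tree $T'$ contributing to $D(z)$, and reads off $\prod_{l\notin T}z_l/D(z)\leq z_j/(z_i+z_j)\leq z'_i/\max(z_i,z'_i)$. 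You instead decompose the remainder into mesh currents $J_i$ of the photon chords and bound each current by circuit theory. These are two languages for the same estimate, and your version is arguably more conceptual; the tree-swap is essentially the combinatorial shadow of your current-divider.

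The one step that does not hold as literally written is the current-divider inequality: the external currents are injected at interior vertices of $\lpath(i)$ (and at the external photon vertex), so the configuration is not a two-terminal parallel combination and ``the fraction of injected current reaching $i$'' is not well defined. You flagged this yourself. The repair is a Th\'evenin argument: $J_i=V_{\mathrm{oc}}/(R_{\mathrm{rest}}+z_i)$ where $V_{\mathrm{oc}}=\sum_{l\in\lpath(i)}z_l I^{(0)}_l$ is the open-circuit voltage across $i$, and $|I^{(0)}_l|\leq C$ follows from the spanning-tree formula $Q_l/D$ with nonnegative monomials (not from a naive ``no current amplification'' heuristic, which fails for general networks); this gives $|J_i|\leq C\,z'_i/z_i$, and $|J_i|\leq C$ separately, hence the claimed bound. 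Once that is in place, and once $\prj$ is read as a fixed linear functional on the coefficients of the momentum monomials (so that your coefficient bounds survive the $q$-derivatives hidden in $\prj$), your argument closes.
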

\begin{proof}
Let us enumerate the lepton lines of $G$ along the path: $1,2,\ldots 2h$. Then the sum in (\ref{eq_lemma_amm}) can be expressed as
$$
X=\prj\left[ \gamma_{\ldots} (m+\hat{Q}'_{2h}) \gamma_{\ldots} \ldots (m+\hat{Q}'_{h+1}) \gamma_{\ldots} \gamma_{\mu} \gamma_{\ldots} (m+\hat{Q}'_{h}) \gamma_{\ldots} \ldots (m+\hat{Q}'_{1}) \gamma_{\ldots} \right],
$$
where $\hat{Q}'_l=\hat{Q}_l(z)/D(z)$ (see Section \ref{subsec_constr_schwinger}). Put
$$
Q''_l=Q'_l-q_l,
$$
where $q_l=p_1$ for $1\leq l\leq h$ and $q_l=p_2$ for $h+1\leq l\leq 2h$. Taking into account Lemma \ref{lemma_amm_mpp} and $m+\hat{Q}'_l=(m+\hat{q}_l)+\hat{Q}''_l$, we obtain that $X$ can be expressed as the sum of analogous expressions $\prj[\ldots]$ with multipliers $\hat{Q}''_l$ or $(m+\hat{q}_l)$ instead of $(m+\hat{Q}'_l)$ and with at least one multiplier $\hat{Q}''_l$.

Let us fix the selection of the sum term and the line $l$ with the multiplier $\hat{Q}''_l$. By $X'$ we denote the contribution of this term. All other multipliers (except $\gamma_{\ldots}$) are linear combinations of $\hat{p}_1,\hat{p}_2$ and $1$ with coefficients that are less or equal $1$ (in absolute value); thus, it is sufficient to estimate the coefficients of $Q''_l$. We have
$$
Q''_l=\frac{Q_l(z)-q_l D(z)}{D(z)}.
$$
Both terms of the numerator can be expressed as sums of the form $\sum_T c(T)\prod_{l\in E(G)\backslash T} z_l$,
where the summation goes over 1-trees $T$ of $G$, the coefficients $c(T)$ are linear combinations of $p_1,p_2$. The terms corresponding to $T$ are cancelled if $l\in T$ and the momentum passing through $l$ in $T$ equals $q_l$ (see Section \ref{subsec_constr_schwinger}).

Suppose $T$ is not cancelled. By definition, put
$$
X'_T=\frac{\prod_{l\in E(G)\backslash T} z_l}{D(z)}.
$$
It is obvious that there exists $i\in\photon(E(G))$ such that $i\in T$ (because $T=\lepton(E(G))$ is cancelled).  Let us consider the vertexes $v_1,\ldots,v_n$ of the lepton path connecting the ends of $i$ ordered along it. The ends $v_1$ and $v_n$ of $i$ belong to different connectivity components of $T\backslash i$. Therefore, there exists $b$ such that $v_b,v_{b+1}$ belong to the different connectivity components. By $j$ we denote the lepton line connecting $v_b$ and $v_{b+1}$. It is obvious that $j\notin T$ and $T'=T\cup\{j\}\backslash \{i\}$ is a 1-tree of $G$. From the fact that both $T$ and $T'$ contribute to $D(z)$, it follows that
$$
X'_T \leq \frac{1}{1+\frac{z_i}{z_j}} = 1-\frac{z_i}{z_i+z_j} \leq 1-\frac{z_i}{z_i+z_i'} = \frac{z_i'}{z_i+z_i'} \leq \frac{z_i'}{\max(z_i,z_i')}.
$$
This completes the proof.
\end{proof}

Let us denote by $i_0$ the value of $i$ on which the maximum (\ref{eq_lemma_amm}) is reached. As a consequence of Lemma \ref{lemma_amm}, we have the following inequality in terms of Lemma \ref{lemma_a_b}:
\begin{equation}\label{eq_a_b_c}
\left| \frac{K(z)\sum_{j:P_j=P} [A\Pi_j] Y_j(z)}{W(z)^M} \right| \leq C\cdot \frac{\prod_{l=1}^L t_l^{\lceil-\omega(\iclos(s^{[l]}))\rceil+A_l+B_l+C_l}}{z_1\ldots z_L},
\end{equation}
where
\begin{equation}\label{eq_cl}
C_l=\begin{cases}
1,\text{ if } |P|=0 \text{ and } i_0\in\iclos(s^{[l]})\backslash s^{[l]}, \\
0\text{ in the other cases.}
\end{cases}
\end{equation}

It is time to formulate and prove the main theorem.
\begin{theorem}
For the Feynman parametric integrand $I'(z_1,\ldots,z_L)$ we have
$$
\left| I'(z_1,\ldots,z_L) \right| \leq C\cdot \frac{\prod_{l=2}^L t_l^{\max\left( \lceil-\omega(\iclos(s^{[l]}))\rceil - \frac{1}{2}, \frac{1}{2} \right)}}{z_1\ldots z_L},
$$
where $C$ is some constant depending only on the structure of the graph. Let us remark that we ignore the multiplier containing $t_1$ (because we use the Feynman parameters, $1/L\leq t_1\leq 1$).
\end{theorem}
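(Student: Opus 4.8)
The plan is to derive the theorem directly from inequality~(\ref{eq_a_b_c}). Since $I'(z)$ is, by construction, the finite sum over all sets $P$ of internal-lepton-line pairs of the left-hand sides of~(\ref{eq_a_b_c}), it suffices to check, for each $P$ (with the numbers $r_i$ supplied by Lemma~\ref{lemma_r}) and each $l\in\{2,\dots,L\}$, that the exponent $\lceil-\omega(\iclos(s^{[l]}))\rceil+A_l+B_l+C_l$ of~(\ref{eq_a_b_c}) is at least $\max\bigl(\lceil-\omega(\iclos(s^{[l]}))\rceil-\tfrac12,\tfrac12\bigr)$; because $z$ consists of Feynman parameters we have $t_l\le 1$, so enlarging the exponent only tightens the estimate, and the claim follows after absorbing the (finite) number of sets $P$ into $C$.

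First I would observe that $A_l\ge 0$ is exactly~(\ref{eq_lemma_r}), that $B_l\ge 0$ because $r_i\in\{0,1\}$, and that $C_l\ge 0$ by~(\ref{eq_cl}); hence the exponent is always $\ge\lceil-\omega(\iclos(s^{[l]}))\rceil$. Moreover, for $l\ge 2$ the set $s^{[l]}$ is a nonempty proper subset of $E(G)$, and since the I-closure does not change the number of connected components, $\iclos(s^{[l]})$ is again nonempty; it is therefore either proper --- in which case the restriction on $G$ gives $\omega(\iclos(s^{[l]}))<0$, so, $\omega$ being half-integer valued, $\lceil-\omega(\iclos(s^{[l]}))\rceil\ge 1$ --- or equal to $E(G)$, in which case $\omega(\iclos(s^{[l]}))=\omega(E(G))=0$. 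In the first sub-case $\max\bigl(\lceil-\omega(\iclos(s^{[l]}))\rceil-\tfrac12,\tfrac12\bigr)=\lceil-\omega(\iclos(s^{[l]}))\rceil-\tfrac12$, which is strictly below the exponent already in hand, so that sub-case is finished; in the second sub-case the target is $\tfrac12$ while the bare bound only gives $0$, and --- the exponent $A_l+B_l+C_l$ being integral there (by $r_i\in\{0,1\}$ and the floors in~(\ref{eq_al})--(\ref{eq_bl})) --- I must prove $A_l+B_l+C_l\ge 1$.

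Establishing $A_l+B_l+C_l\ge 1$ when $\iclos(s^{[l]})=E(G)$ is the core of the argument and the step I expect to be hardest. Here $\iclos(s^{[l]})=E(G)$ forces $\lepton(s^{[l]})=\lepton(E(G))$ (the I-closure only adds photon lines), hence $P[s^{[l]}]=P$; combining this with the structure of a magnetic-moment graph without lepton loops --- every vertex carrying two lepton and one photon half-edge, so that $|\photon(E(G))|=\tfrac12|\lepton(E(G))|$ --- and with $\sum_i r_i=L+u[P]=|\photon(E(G))|-|P|$ from Lemma~\ref{lemma_r}, a short count via~(\ref{eq_al}) yields $A_l=0$ identically in this regime. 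It then remains to produce the required unit from $B_l+C_l$. When $|P|\ge 1$ I would get it from $B_l$: the construction of Lemma~\ref{lemma_r} introduces exactly $|P|\ge 1$ indices with $r_i=0$, and, using the freedom in which index is zeroed at each step, one arranges that at least one such index lies in $\iclos(s^{[l]})\setminus s^{[l]}=E(G)\setminus s^{[l]}$ for every problematic $l$ --- e.g.\ by forcing a zero onto the photon line carrying the largest Schwinger variable, which belongs to all of these complements. When $|P|=0$ the same construction forces every $r_i=1$, so $B_l=0$ and the unit can only come from $C_l$, i.e.\ from the magnetic moment projector: invoking Lemmas~\ref{lemma_amm_mpp} and~\ref{lemma_amm} one must show that the maximizing photon line $i_0$ in~(\ref{eq_lemma_amm}) can be chosen in $E(G)\setminus s^{[l]}$ whenever $\iclos(s^{[l]})=E(G)$, so that $C_l=1$. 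This reconciliation of a single $i_0$ with all the boundary sets $s^{[l]}$ simultaneously, leaning on $\prj\gamma_\mu=0$ and the chain of Dirac identities behind Lemma~\ref{lemma_amm_mpp} rather than on the crude estimate of Lemma~\ref{lemma_amm} alone, is the delicate point on which the whole proof hinges.
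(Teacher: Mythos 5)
Your reduction of the theorem to the integer inequality $A_l+B_l+C_l\ge 1$ in the sectors with $\iclos(s^{[l]})=E(G)$ is where the argument breaks, and it breaks irreparably: that inequality is false in general, and the integrality you correctly note is precisely why. If the exponent were $\ge 1$ in every problematic sector while the bare bound $\lceil-\omega(\iclos(s^{[l]}))\rceil\ge 1$ held in the others, you would obtain $|I'(z)|\le C\prod_{l\ge2}t_l/(z_1\cdots z_L)\le C'\min(z_1,\ldots,z_L)/(z_1\cdots z_L)$, i.e. the power $1/2$ in (\ref{eq_general_est}) could be replaced by $1$; but the Appendix exhibits a two-loop graph and a Hepp sector in which $|I'(z)|\asymp\delta^{-8}$ while that bound would give $O(\delta^{-7})$. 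Concretely, for $|P|\ge1$ the ``freedom'' you invoke in Lemma \ref{lemma_r} need not exist: the construction places exactly $|P|$ zeros, and the constraints (\ref{eq_lemma_r_modified}) for large $l$ can force all of them onto photon lines lying inside $s^{[l]}$ for every problematic $l$. For instance, in the Appendix graph with $P=\{\{2,3\}\}$, photon $5$ spanning lepton lines $2,3$, photon $6$ spanning the whole lepton path, and the ordering $z_6\ge z_1\ge z_2\ge z_3\ge z_4\ge z_5$, the step $l=3$ of Lemma \ref{lemma_r} forces $r_5=0$ and exhausts the budget $\sum_i(1-r_i)=|P|=1$; hence $r_6=1$, so $B_2=0$, while $A_2=0$ and $C_2=0$, and the exponent of $t_2$ is $0$, not $\ge1$. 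Likewise, for $|P|=0$ the line $i_0$ is the maximizer in (\ref{eq_lemma_amm}), not a free parameter, and nothing places it outside $s^{[l]}$.

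The missing idea is the paper's half-integer redistribution of the $r_i$: keeping Lemma \ref{lemma_r}'s output but moving one half-unit of $1-r$ from a forced zero (respectively, pairing $j_0$ with $i_0$ when $|P|=0$) onto a photon line $j_0$ that lies outside $s^{[l]}$ for all problematic $l$ simultaneously (such a $j_0$ exists because those complements are nested and consist of photon lines). This buys $B'_l\ge1/2$ (or $B'_l+C'_l\ge1/2$) exactly where it is needed, at the affordable cost $A'_l\ge A_l-1/2\ge-1/2$ in the non-problematic sectors --- which is why the theorem's exponents are $\lceil-\omega(\iclos(s^{[l]}))\rceil-\tfrac12$ and $\tfrac12$ rather than integers. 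An all-integer bookkeeping of the kind you propose cannot reach the sharp half-integer exponent, so the problematic case must be reworked along these lines rather than patched.
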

\begin{proof}
Let us fix $P$, $i_0$ (for $|P|=0$) and define $r_i$ for $i\in\photon(E(G))$ using Lemma \ref{lemma_r}. We will use (\ref{eq_a_b_c}), but with $A_l',B_l',C_l'$ instead of $A_l,B_l,C_l$, where $A_l',B_l',C_l'$ are defined by the rules (\ref{eq_al}), (\ref{eq_bl}), (\ref{eq_cl}), but with $r_i'$ instead of $r_i$. We will define $r_i'$ below as a modification of $r_i$. 

Let us choose $j_0\in\photon(E(G))$ such that $j_0\notin s^{[l]}$ for all $l$ satisfying $l\geq 2$, $\lepton(E(G))\subseteq s^{[l]}$ (it is sufficient to examine the minimal $l$ satisfying this). 

To prove the theorem it is enough to obtain real numbers $r'_i\geq 0$ for $i\in \photon(E(G))$ such that $\sum_i r'_i = |\photon(E(G))|-|P|$ and $A_l'+B_l'+C_l'\geq 1/2$ for all $l$ satisfying $\lepton(E(G))\subseteq s^{[l]}$, $A_l'+B_l'+C_l'\geq -1/2$ for the other $l$ ($l\geq 2$).

There are two cases:
\begin{enumerate}
\item $|P|\geq 1$. This case splits into two subcases:
\begin{itemize}
\item $r_{j_0}=0$; in this case, we put $r_i'=r_i$ for all $i$ and we have $A_l'\geq 0$, $C_l'\geq 0$; $B_l'\geq 1$ for $\lepton(E(G))\subseteq s^{[l]}$, $B'_l\geq 0$ for the other cases;
\item $r_{j_0}=1$; let us take $j$ such that $r_j=0$ (we always can do this, because $\sum_i (1-r_i)=|P|>0$); put $r'_{j_0}=1/2$, $r'_{j}=1/2$, $r'_i=r_i$ for the other $i$; we have $A_l'=0$, $B_l'\geq 1/2$, $C_l'\geq 0$ for $\lepton(E(G))\subseteq s^{[l]}$, $A_l'\geq A_l-1/2\geq -1/2$, $B_l'\geq 0$, $C_l'\geq 0$ for the other cases. 
\end{itemize}
\item $|P|=0$. This case splits into subcases too:
\begin{itemize}
\item $i_0=j_0$; in this case, we put $r_i'=r_i$ for all $i$ and we have $A_l'\geq 0$, $B_l'\geq 0$; $C_l=1$ for $\lepton(E(G))\subseteq s^{[l]}$, $C_l\geq 0$ for the other cases;
\item $i_0\neq j_0$, $r_{j_0}=0$; put $r_i'=r_i$ for all $i$ and we have $A_l'\geq 0$, $C_l'\geq 0$; $B_l'\geq 1$ for $\lepton(E(G))\subseteq s^{[l]}$, $B'_l\geq 0$ for the other cases;
\item $i_0\neq j_0$, $r_{j_0}=1$; put $r'_{j_0}=1/2$, $r'_{i_0}=r_{i_0}+1/2$, $r'_i=r_i$ for the other $i$; we have $A'_l=0$, $B'_l+C'_l\geq 1/2$ for $\lepton(E(G))\subseteq s^{[l]}$ (if $i_0\in\iclos(s^{[l]})\backslash s^{[l]}$, then $C'_l=1$ will compensate $r'_{i_0}>1$ if occur), $A'_l\geq A_l-1/2\geq -1/2$, $B'_l+C'_l\geq 0$ for the other cases (analogously).
\end{itemize}
\end{enumerate}
The theorem is proved.
\end{proof}

\section{Conclusions}

The finiteness was proved rigorously for each contribution to the lepton anomalous magnetic moment of QED Feynman graphs without lepton loops and without UV divergent subgraphs\footnote{If a graph contains UV divergent subgraphs, divergence subtraction is required; this complicates the problem. The way based on the subtraction procedure from ~\cite{volkov_2015} appears to be promising. This procedure eliminates all IR and UV divergences in Feynman parametric space before integration without any use of dimensional regularization and so on. Moreover, it is based on linear operators and has a simple and elegant formulation. This method was used for obtaining high-precision results ~\cite{volkov_5loops_prd,volkov_gpu}.}. This finiteness was proved in terms of Feynman parameters; the rigorous part starts from the point when the Feynman parametric integral is constructed. For proving the finiteness an upper bound for the absolute value of the Feynman parametric integrand was obtained. This upper bound is formulated in terms of the Hepp sectors, ultraviolet degrees of divergence, I-closures. 

A flexibility of the upper bound in different Hepp sectors allows us to use this idea for numerical integration based on the Monte Carlo method. The notion of I-closure was introduced by the author in the previous papers for numerical calculation of the electron $g-2$. The ideas and technique of this partial case may be, in principle, applied to other problems.

As a consequence, the inequality (\ref{eq_general_est}) for the Feynman parametric integrand was obtained. The power $1/2$ in the inequality is not improvable (see Appendix) in contrast to the ``pure ultraviolet'' case for which we can make do with integer numbers. Perhaps this means that approximations in E. Speer's form (\ref{eq_speer}) are not adequate for properly examining the infrared limit. However, no other estimations are known.

It is very important to note that there is no mathematically rigorous general case divergence cancellation proof even for QED at this point in time\footnote{There are a lot of calculations demonstrating this cancellation. However, all known attempts to make a general-case mathematical proof suffer from drawbacks and incompleteness (as mentioned in Section \ref{sec_intro}).}.

\section*{Acknowledgments}

The author thanks Lidia Kalinovskaya, Oleg Teryaev and Andrey Kataev for their help in organizational issues.

\section*{Appendix: an example that demonstrates the exactness of the estimation}

Let us consider an example demonstrating that the power $1/2$ in (\ref{eq_general_est}) is not improvable. Let us take the two-loop Feynman graph from Fig. \ref{fig_example_unimprov}.

\begin{figure}[h]
\begin{center}
\includegraphics{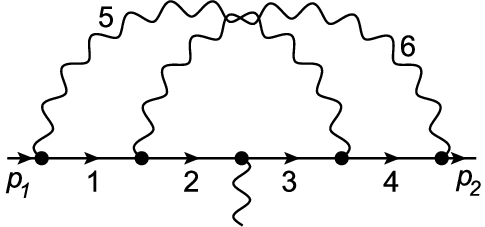}
\end{center}
\caption{Feynman graph for demonstrating the exactness of the estimation.}
\label{fig_example_unimprov}
\end{figure}

For this graph we will use the following values of Feynman parameters:
$$
z_1=\Lambda\delta,\ z_2=z_3=z_4=\delta^2,\ z_5\sim 1,\ z_6=\Lambda\delta^2,
$$
where $\delta\rightarrow 0$, for the constant $\Lambda$ we take some sufficiently big value. We will write $f(\delta)\sim g(\delta)$, if $\lim_{\delta\rightarrow 0} f(\delta)/g(\delta)=1$. Also, we will write $f(\delta)\asymp g(\delta)$, if $0<C_1\leq |f(\delta)/g(\delta)|<C_2$ for all $0<\delta<\delta_0$, where $\delta_0>0$.

We have
$$
D\sim (\Lambda+3)\delta^2,\ W\sim -m^2\left(\frac{9}{\Lambda+3}+\Lambda^2\right)\delta^2,
$$
$$
B_{12}=B_{13}=(\Lambda+1)\delta^2,\ B_{14}=-2\delta^2,\ B_{23}=B_{24}=B_{34}\sim 1,
$$
$$
\min(z_1,\ldots,z_6)\asymp \delta^2,\ z_1\ldots z_6\asymp \delta^9.
$$
We have terms that potentially can have the asymptotics
$$
\frac{\min(z_1,\ldots,z_6)^{1/2}}{z_1\ldots z_6}\asymp \frac{1}{\delta^8}
$$
for $P=\{\empty\}$, $P=\{\{2,3\}\}$, $P=\{\{2,4\}\}$, $P=\{\{3,4\}\}$ (we have $M=2-|P|$ in (\ref{eq_feynman_param_term})). The most difficult task is to demonstrate that the asymptotics is not cancelled in some way.

If we take a sufficiently big $\Lambda$, then the ``naive'' asymptotics for the terms corresponding to $|P|=1$ will dominate over the other terms asymptotics. Moreover, in the multipliers $m+\hat{Q}_l/D$ for $l=1,2,3,4$ the part $m+\hat{p}_j$ will dominate, where $j=1$ for $l=1,2$, $j=2$ for $l=3,4$. The contributions of these dominated terms are
$$
\frac{B_{23}}{D^3 W} \prj[F_{23}],\quad \frac{B_{24}}{D^3 W} \prj[F_{24}],\quad \frac{B_{34}}{D^3 W} \prj[F_{34}]
$$
for $P=\{\{2,3\}\}$, $P=\{\{2,4\}\}$, $P=\{\{3,4\}\}$ correspondingly, where
$$
F_{23}=\gamma_{\nu}(m+\hat{p}_2)\gamma_{\lambda}\gamma_{\xi}\gamma_{\mu}\gamma_{\xi}\gamma_{\nu}(m+\hat{p}_1)\gamma_{\lambda},
$$
$$
F_{24}=\gamma_{\nu}\gamma_{\xi}\gamma_{\lambda}(m+\hat{p}_2)\gamma_{\mu}\gamma_{\xi}\gamma_{\nu}(m+\hat{p}_1)\gamma_{\lambda}
$$
$$
F_{34}=\gamma_{\nu}\gamma_{\xi}\gamma_{\lambda}\gamma_{\xi}\gamma_{\mu}(m+\hat{p}_1)\gamma_{\nu}(m+\hat{p}_1)\gamma_{\lambda}.
$$
The calculation with the help of a computer gives
\small
$$
F_{23}=8m\gamma_{\mu}\hat{p}_1 + 8m\hat{p}_2\gamma_{\mu} - 16mp_{2\mu} - 16mp_{1\mu} + 16(p_1p_2)\gamma_{\mu} - 8m^2\gamma_{\mu},
$$
$$
F_{24}=16(p_1p_2)\gamma_{\mu}-16\hat{p}_2p_{1\mu}+32m p_{1\mu}+16m^2 \gamma_{\mu} - 16m \gamma_{\mu}\hat{p}_1 +16m p_{2\mu}+8\hat{p}_2 \gamma_{\mu} \hat{p}_1 - 16\hat{p}_1 p_{2\mu},
$$
$$
F_{34} = -8m \gamma_{\mu} \hat{p}_1 + 16mp_{1\mu} + 8p_1^2 \gamma_{\mu} - 16p_{1\mu} \hat{p}_1 - 8m^2\gamma_{\mu},
$$
$$
\prj[F_{23}]=16m^2,\quad \prj[F_{24}]=-8m^2,\quad \prj[F_{34}]=0.
$$
Since $\prj[F_{23}+F_{24}+F_{34}]=8m^2\neq 0$, $B_{23}=B_{24}=B_{34}$, these asymptotically dominated terms are not cancelled.
\normalsize

\end{document}